\documentclass[12pt]{article}
\usepackage{amsmath,amssymb,amstext,dsfont,fancyvrb,float,fontenc,graphicx,caption,subcaption,theorem,hyperref}
\usepackage[utf8]{inputenc}

\usepackage[letterpaper]{geometry}
\newlength{\BiblioSpacing}
\renewenvironment{thebibliography}[1]{
\begin{oldthebibliography}{#1}
\setlength{\parskip}{\BiblioSpacing}
\setlength{\itemsep}{\BiblioSpacing}
}
{
\end{oldthebibliography}
}

\usepackage[strict]{changepage}
\def\abstractname{Abstract -}   
\def\abstract{\begin{adjustwidth}{1cm}{1cm} \par    \footnotesize \noindent {\bf \abstractname} 
\def\endabstract{ \end{adjustwidth} \smallskip }}

{\theorembodyfont{\itshape}\newtheorem{theorem}{Theorem}[section]}
{\theorembodyfont{\itshape}\newtheorem{proposition}[theorem]{Proposition}}
{\theorembodyfont{\itshape}\newtheorem{definition}[theorem]{Definition}}
{\theorembodyfont{\itshape}}
{\theorembodyfont{\itshape}\newtheorem{corollary}[theorem]{Corollary}}
{\theorembodyfont{\rm}}
{\theorembodyfont{\rm}}
{\theorembodyfont{\rm}}
{\theorembodyfont{\rm }\newtheorem{example}[theorem]{Example}}

\title{\Large\bf Bloc Voting on Single Peaked Preferences}
\author{\sc Ariel Calver, Serena Pallan, Alice (Seoyoung) Park and Jennifer Wilson}

\begin{document}
\setcounter{page}{1}
\maketitle

\vskip 1.5em

\begin{abstract}
We analyze the winning coalitions that arise under Bloc voting when voters preferences are single-peaked. For small numbers of candidates  and numbers of winners, we  determine conditions under which candidates in winning coalitions are adjacent. We also analyze the results of pairwise contests between winning and losing candidates and assess when the winning coalitions satisfy several proposed extensions of the Condorcet  criterion to multiwinner voting methods. Finally, we use Monte Carlo simulations to investigate how frequently these coalitions arise under different assumptions about voter behavior. 
\end{abstract}
 
\begin{keywords}
Bloc voting; multi-winner voting methods; Condorcet criterion; single-peaked preferences
\end{keywords}

\begin{MSC}
91B12; 91B14
\end{MSC}

\section{Introduction}
Bloc voting is widely used in situations when several options or several individuals  must be selected. Commonly known as ``vote-for-two'' or ``vote-for-three,'' Bloc voting is used to create movie short-lists and multi-slate candidates in local elections. Its primary advantage is its simplicity: individuals vote for as many items or candidates as need to be selected. Bloc voting is less used in the political context where other multi-winner voting methods, such as standard transferable voting (STV) have been adopted to elect multiple candidates in local elections in Scotland, Australia and elsewhere (\cite{MG24}). 

Bloc voting has disadvantages: it produces results that are less ``proportional'' than STV, and it is more susceptible to undesirable outcomes such as the spoiler effect (\cite{MW23}). However, many of its  properties have not been thoroughly investigated.

 In this article, we look at what kinds of winning coalitions (sets of winning candidates) arise when Bloc voting is used to select a small number of winners from a small pool of candidates. We also  analyze how these winning coalitions  behave in relation to the Condorcet winner.   The Condorcet winner is a candidate
 that  defeats every other candidate in a head-to-head contest. In single-winner voting methods, the notion of a Condorcet winner  has  played a prominent role in assessing the ``fairness'' of a voting method, and a voting method that always select the Condorcet winner when it exists is called  Condorcet-consistent or a Condorcet method.  However, there is no single way to usefully extend the idea of a Condorcet winner  to the multi-winner context. Several different notions have been proposed, which we explore using Bloc voting.

We assume voters used ranked ballots, in which they indicate their preferences for each candidate. Having complete preference information, while not necessary to run a Bloc election, allows us to determine the result of all possible pairwise contests.  We also confine our discussion to single-peaked preferences, in which voters' preferences are consistent with the arrangement of candidates  along a political spectrum (left to right). When voters' preferences are single-peaked, there is always a Condorcet winner. This makes single-peaked preferences a useful setting in which to assess how the winning candidates in a Bloc voting election fair in  head-to-head contests with non-winning candidates. It also allows us to determine when candidates are subject to the {\it center squeeze} phenomenon in which a centrist candidate loses in favor or more extreme candidates.

We determine conditions under which the winning coalitions  are adjacent. We also classify which winning coalitions satisfy different Condorcet criteria and local stability conditions. We provide a complete characterization when there are four, five, or six candidates and a partial analysis for a larger number of candidates.  In the last part of the paper, we use Monte Carlo simulations to investigate the probability that winning coalitions under Bloc voting satisfy the different Condorcet criteria based on three different models of voter preferences.

There is a growing literature on the behavior of Bloc voting and other multi-winner voting methods.  The authors in \cite{EF17a} compare several voting methods under different models of voter behavior. The properties of these methods are discussed in \cite{EF17a}; \cite{SF19} contains an axiomatic characterization of scoring rules, a class that includes Bloc voting. Extensions of the Condorcet principle to multi-winner voting methods are discussed in \cite{HE17}. Additional references appear throughout the paper.

The remainder of the discussion proceeds as follows. In Section 2, we introduce Bloc voting, formalize the single-peaked preference model, and define three different generalizations of the Condorcet criterion to the multi-winner setting. In Section 3, we analyze the winning sets vis-a-vis head-to-head elections when there are four and five candidates. In  Section 4 we prove several results for an  arbitrary numbers of candidates and apply the results to   six and seven-candidate elections. Section 5 describes the results of the Monte Carlo simulations.  We conclude in Section 6.

\section{Bloc Voting, Single-Peaked Preferences and Condorcet Sets}\label{sec_prelims}

 There are several  ways to collect information about voters' preferences over a set of candidates. Ballots may ask voters to:  ``vote for one'' or ``vote for k'' candidates;   vote for as many candidates as they approve of (so-called approval ballots); award points to candidates; or  rank some or all of the candidates. In this paper, we focus on the latter. We assume that each voter has a strict linear preference over all the candidates (no ties), so they can rank order them in order from most-preferred to least-preferred. This is a strong assumption; in actual elections, voters are often allowed to only rank  only a fixed number of candidates (which may be less than the number running). Additionally, voters often choose to rank fewer candidates than they are allowed, a phenomenon known as ballot truncation.
 
 We assume there are $N$ voters and $m$ candidates, and that $k$ winners are to be selected, where $k$ is a fixed integer $1 \le k < m$.  A multi-winner voting method is a function that inputs the set of voter preferences (called a preference profile)   and outputs a set of $k$ winners. Because of the possibility of ties, voting methods are often considered multi-valued. To avoid this complication, we assume that $N$ is odd and focus primarily on situations without ties.
 
Under  Bloc voting, each voter casts votes for the top $k$ candidates in their ranking.  The $k$ candidates with the greatest number of votes are  declared winners. (If $k=1$, the winner is the candidate with the most first-place votes,  known as the plurality winner.)  
We illustrate with the following example.

\begin{example}
\label{ex_first} \textbf{Bloc Voting} Suppose there are five candidates $A, B, C, D$ and $E$ and 145 voters whose    ballots are as shown in  Table \ref{ex_initial}. The  top row indicates the number of voters with the given ranking. For example, there are 50 voters rank the candidates $A \succ B \succ C \succ D \succ E.$  

\begin{table} [h]
\begin{center}
$$\begin{array}{ccccc}
50 & 40 & 20 & 30 & 5  \\
\hline
A & B & C & D & E \\
B & E & B & B & D \\
C & C & D & E & B \\
D & D & E & C & C \\
E & A & A & A & A 
\end{array}$$
\caption{A voter profile with five candidates.}
\end{center}
\end{table}\label{ex_initial}

If $k=1$  each individual votes only for their top choice. Candidate $A$, with 50 votes, is the winner.
If $k=2$, $A$ receives votes only from the 50 individuals in column 1. $B$  receives votes from individuals in columns 1 through 4, for a total of 140 votes. Similarly, $C$ receives 20 votes, $D$ receives 35 votes, and $E$ receives 45 votes. Thus the winning set is $AB$.
If $k=3$,  $B$ receives an additional 5 votes for a total of 145 votes, $C$ receives 110 votes, $D$ receives 55 votes, and $E$ receives 75 votes, so the winning set is $BCE.$ If $k=4$, the winning set is $BCDE.$ \end{example}

Example \ref{ex_first}  demonstrates an interesting phenomenon that a candidate who is a winner when selecting 
$\ell$  candidates may not be a winner when selecting $\ell+1$ candidates.  In this example, $A$  wins when
$k=1$ or $k=2$ but loses when $k=3$ or $k=4$.
Because of this possibility, Bloc voting is said to violate \emph{committee monotonicity}.

Whether committee monotonicity is an important property   depends on the context. If the goal of the election is to select the ``top k'' candidates to interview for a job, then it is perhaps disturbing that a candidate may fall from a shortlist if the size of the list increases. On the other hand, if the goal of the election is to select a set of political representatives, it may be that as $k$ increases, centrist candidates, for instance, are replaced with candidates representing left and right wing segments of the electorate.

To provide a context for Bloc voting, and to introduce the notion of head-to-head elections, we define a second  voting method, Copland's method. Under Copland's method, candidates receive one point for each head-to-head contest they win against another candidate (and half a point if the result is a tie). The candidates are then ranked based on the total number of points they receive. Copland's method is most often used to find a single winner, but it can also be used to select $k$ winners by picking the $k$ candidates with the highest Copland scores.

\begin{example} \label{ex_cop} \textbf{Copland's Method} 
Suppose the preference profile is as in Table \ref{ex_initial}. In a head-to-head contest between $A$ and $B$, each voter selects their preferred candidate. Candidate $A$ receives 50 votes from the first column while $B$ receives 95 votes from the remaining columns.  Thus $B$ wins the head-to-head contest and receives one point.  The results of the remaining head-to-head contests are as follows. 

\begin{center}
\begin{tabular}{c|c|c}
Head-to-head & Vote Total & Winner \\
\hline
$A$ vs. $B$ & $50$ to $95$ & $B$  \\
$A$ vs. $C$ & $50$ to $95$ & $C$ \\
$A$ vs. $D$ & $50$ to $95$ & $D$ \\
$A$ vs. $E$ & $50$ to $95$ & $E$ \\
$B$ vs. $C$ & $125$ to $20$ & $B$ \\
$B$ vs. $D$ & $110$ to $35$ & $B$ \\
$B$ vs. $E$ & $140$ to $5$ & $B$ \\
$C$ vs. $D$ & $110$ to $35$ & $C$ \\
$C$ vs. $E$ & $70$ to $75$ & $E$  \\
$D$ vs. $E$ & $105$ to $40$ & $D$ \\
\end{tabular}
\end{center}

The Copland scores for $A, B, C, D$ and $E$ are  $0, 4, 2, 2$ and $2$ respectively, resulting in an ordering of $B \succ C \approx E \approx D \succ A.$
If $k=1$, the winner is $B$; if $k=2$, there is a tie, resulting in three winning sets: $BC$, $BD$ and $BE$. If  $k=3,$ there is another tie and the winning sets are:$BCD$, $BCE$, $BDE$ and $CDE.$ If $k=4$, the winning set is $BCDE.$ 
\end{example}

 When used as a single-winner voting method, Copland's method is  Condorcet-consistent, as the Condorcet winner, if it exists, will always have the highest score. (This is the case for candidate $B$ in Example \ref{ex_cop}.) When extended to the multi-winner context, Copland's method   clearly satisfies committee monotonicity. 
 
\subsection{Single-Peaked Preferences} 
Single-peaked preferences were first introduced  in \cite{B48}. Under a single-peaked model, candidates are linearly ordered, and voters' preferences are consistent with this ordering. Usually, the relative position of the candidates is  interpreted as placement along a one-dimensional left-right political spectrum.  Voters may have differing perceptions about the   distances between candidates. However they all agree on the ordering. 

This assumption restricts the  candidate rankings that are possible. For instance, suppose four candidates are ordered left-to-right as in Figure \ref{fig_spectrum}. A voter who prefers $D$ must rank $C$ second, followed by $B$ and $A$ (denoted $D \succ C \succ B \succ A$). A voter who prefers $C$, may rank either $B$ or $D$ second.

\begin{figure}[!htb]
\begin{center}
        \includegraphics[width=0.5\linewidth]{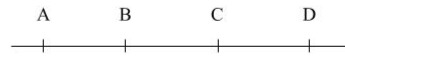}
\end{center}
 \caption{\label{fig_spectrum} In this model, A represents the most left-wing candidate, B and C are moderate candidates, and D is the most right-wing candidate.}
 \end{figure}

A consequence of single-peaked preferences is that no voter's preferences can ``skip'' a candidate. In Figure \ref{fig_spectrum}, no voter may have preferences $B \succ D \succ C \succ A$.  
The profile in Example \ref{ex_first} cannot represent single-peaked preferences since there are voters whose preferences include $C \succ D \succ E$ (first column), $E \succ C \succ D$ (second column), and $D \succ E \succ C$  (third column). And it is not possible to arrange $C$, $D$ and $E$ in such a way that all of these preferences avoid such skips.

The term ``single-peaked'' refers to the fact that if each voter's preferences are represented on a graph with the candidates arranged left-to-right on the horizontal axis, and voter preferences represented on the vertical axis, then each voter's preferences have a single peak. 

\subsection{Head-to-Head Contests on Single-Peaked Preferences}
A distinguishing feature of single-peaked preferences is that head-to-head contests produce a linear ordering of candidates. This means there exists a Condorcet winner who beats every other candidate, a candidate who beats every other candidate except the Condorcet winner, and so on. This result, known as the Median Voter Theorem \cite{B48}, follows because the median candidate (assuming an odd number of voters) must be the Condorcet winner. In a head-to-head contest with a candidate to the left, the median candidate captures the right half of voters plus voters between the two candidates. The same holds for contests with candidates to the right.

If the Condorcet winner is eliminated and ballots adjusted by moving lower-ranked candidates up, the new median candidate wins all head-to-head contests against remaining candidates. Repeating this process until no candidates remain produces a linear ordering, as illustrated in the following example.

\begin{example}Consider the profile of 211 voters shown  in Table \ref{single_peaked} (Left). If the first-place votes are lined up as shown below,

$$\underbrace{A, A, \ldots A}_{50\rm\ times}, \underbrace{B, B, \ldots, B}_{51 \rm\ times},  \underbrace{C, C, \ldots, C}_{90 \rm\ times}, \underbrace{D, D, \ldots, D}_{20 \rm\ times},$$
we  see that the median candidate (in the 106th place) is $C$. Removing $C$ results in the adjusted profile shown in Table \ref{single_peaked} (Right) where the new median candidate is $B$. Another elimination produces the next median candidate, which is $A$.  
Thus, $C$ beats every candidate in a head-to-head contest; $B$ beats candidates $A$ and $D$ in head-to-head contests and $A$ beats $D$ in a head-to-head contest, resulting in Copland scores of $3, 2, 1$ and $0$ respectively.
\end{example}
\begin{table}[h]
$$\begin{array}{cccc}
50 & 51 & 90 & 20   \\
\hline
A & B & C & D  \\
B & A & B & C  \\
C & C & A & B   \\
D & D & D & A   \\
\end{array}
\hskip 1cm
\begin{array}{ccc}
50 & 141 & 20    \\
\hline
A & B & D  \\
B & A & B  \\
D & D & A   \\
\end{array}$$
\caption{A preference profile and adjusted profile representing singe-peaked preferences.}\label{single_peaked}
\end{table}

We call the ordering induced by the head-to-head rankings (or, equivalently, the ranking induced by the Copland scores), as the {\it Condorcet ranking}.  Because of this ordering,  he Copland scores if there are $m$ candidates (assuming an odd number of voters) will always be, arranged in decreasing order, $m-1, m-2, \ldots, 0.$

If voter preferences are  not single-peaked, such an ordering may not be possible. This is the case in Example 1, where $C$ beats $D$ in a head-to-head contest, $D$ beats $E$ in a head-to-head contest and $E$ beats $C$ in a head-to-head contest, resulting in what is known as a Condorcet cycle.

\subsection{Condorcet Extensions to Multi-Winner Elections}
There have been several  generalizations of the Condorcet winner to sets of winners proposed for multi-winner elections. The first notion was suggested independently in  \cite{G03} and  \cite{R03}. Following the language in \cite{HE17}, we will refer to these sets as Gehrlein-stable.

\begin{definition}
A set of candidates $W$ is Gehrlein-stable if, any any head-to-head contest between a candidate in $W$ and a candidate not in $W$, the candidate in $W$ wins. 
\end{definition}

Every preference profile has a trivial Gehrlein-stable set: the set of all candidates. However, a Gehrlein-stable set of a fixed size $k$ may not exist.

Elkind et al. \cite{HE17} define a weaker notion:
\begin{definition}
A set of candidates $W$ is a Condorcet set or simply Condorcet, if for every candidate not in $W$, there exists a candidate in $W$ who wins in a head-to-head contest against the candidate not in $W$. 
\end{definition}

Every Gehrlein-stable set is a Condorcet set, but the reverse is not true. In Example 1, the Bloc winners when $k=2$, $AB$, form a Condorcet set since a majority of voters prefer $B$ (the Condorcet winner) to any every other candidate. However, $AB$ is not Gehrlein-stable since 95 voters prefer $C$, $D$, or $E$ to $A$.

For single-peaked preferences, these definitions simplify as follows.
\begin{proposition}
Under single-peaked preferences with an odd number of voters, a set is Condorcet  if and only if it contains the Condorcet winner. A set of size $k$ is Gehrlein-stable if and only if it consists of the top $k$ candidates in the Condorcet ranking.
\end{proposition}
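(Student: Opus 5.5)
The whole argument rests on the Median Voter Theorem as discussed above. With single-peaked preferences and an odd number $N$ of voters, the pairwise majority relation is a strict linear order, the Condorcet ranking $c_1 \succ c_2 \succ \cdots \succ c_m$. In this order $c_i$ beats $c_j$ in a head-to-head contest exactly when $i<j$. Oddness of $N$ guarantees there are no pairwise ties, and the repeated-elimination argument shows the relation is transitive and complete. So I would first record this fact: every head-to-head contest is decided by comparing positions in the Condorcet ranking. Both claims of the proposition then reduce to statements about a linear order.

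For the Condorcet-set claim I will prove the two directions separately.
\begin{itemize}
\item Suppose $W$ contains $c_1$. Every candidate $x \notin W$ satisfies $x=c_j$ with $j>1$, so $c_1 \in W$ beats $x$, and $W$ is Condorcet.
\item Suppose $W$ does not contain $c_1$. Then $c_1 \notin W$, but no candidate beats $c_1$, so in particular no member of $W$ beats $c_1$, and $W$ is not Condorcet.
\end{itemize}
This direction relies only on $c_1$ being the Condorcet winner, which is guaranteed by the median-candidate argument.

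For the Gehrlein-stable claim, fix $|W|=k$ and again prove both directions.
\begin{itemize}
\item If $W=\{c_1,\dots,c_k\}$, then any $c_i\in W$ and $c_j\notin W$ have $i\le k<j$, so $c_i$ beats $c_j$ and $W$ is Gehrlein-stable.
\item Conversely, suppose $W$ is Gehrlein-stable but not the top $k$. Since $|W|=k$, some $c_j\in W$ has $j>k$, and some $c_i\notin W$ has $i\le k$. Then $i<j$, so $c_i$ beats $c_j$, which contradicts Gehrlein stability.
\end{itemize}

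The only real obstacle is the first step, justifying that pairwise majority is a genuine linear order. The paper states this via the Median Voter Theorem and iterated elimination, and it needs one check: removing a candidate from single-peaked ballots leaves single-peaked ballots with respect to the induced order. That ensures the median argument applies at every stage, and that the winner at stage $r$ beats all remaining candidates in the original profile, since head-to-head results between remaining candidates are unaffected by deleting others. I will take this as given from the earlier discussion. After that, both claims are pure order-theoretic bookkeeping.
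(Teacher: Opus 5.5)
Your proof is correct and takes the same route as the paper. The paper gives no written proof of this proposition and treats it as an immediate consequence of the linear Condorcet ranking (Median Voter Theorem plus iterated elimination) from the preceding head-to-head discussion; your two arguments (containing $c_1$ for the Condorcet property, and the pigeonhole exchange giving $c_i\notin W$, $c_j\in W$ with $i\le k<j$ for Gehrlein-stability) are exactly the bookkeeping the paper leaves implicit.
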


Under single-peaked preferences, the candidates in a Gehrlein-stable set must be adjacent.

\begin{proposition}\label{Geh_adj}
Under single-peaked preferences, every Gehrlein-stable set consists of adjacent candidates.
\end{proposition}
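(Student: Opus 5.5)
The plan is to reduce the statement to a property of the Condorcet ranking and then use single-peakedness directly. By the preceding Proposition, a Gehrlein-stable set of size $k$ (with an odd number of voters, so there are no ties) consists of the top $k$ candidates in the Condorcet ranking. So it suffices to show that for every $k$, the top $k$ candidates in the Condorcet ranking form an interval of the left-to-right order. Equivalently, we must rule out a candidate $Y$ lying strictly between two members $X$ and $Z$ of the set while $Y$ itself is excluded.

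The key step is a ``betweenness'' lemma: if $X$, $Y$, $Z$ lie in that left-to-right order, then in head-to-head contests $Y$ cannot lose to both $X$ and $Z$. The argument uses single-peakedness voter by voter. Any voter who prefers $X$ to $Y$ has a peak weakly to the left of $Y$. Since preferences decrease moving away from the peak, such a voter also prefers $Y$ to $Z$. Hence the voters preferring $X$ to $Y$ and the voters preferring $Z$ to $Y$ are disjoint sets. They cannot both be strict majorities, so $Y$ beats at least one of $X$ and $Z$.

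With the lemma in hand, suppose $W$ is Gehrlein-stable and $X, Z \in W$, with $Y \notin W$ strictly between them. By the definition of Gehrlein-stability, both $X$ and $Z$ beat $Y$ head-to-head, which contradicts the lemma. Therefore every candidate between two members of $W$ lies in $W$, i.e.\ $W$ consists of adjacent candidates. This argument works directly from the definition, so it covers every Gehrlein-stable set, including the trivial set of all candidates, without needing the size-$k$ characterization at all.

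The main obstacle is making the voter-level implication precise. We need ``prefers $X$ to $Y$'' to imply that the peak is left of $Y$, and that in turn to imply ``prefers $Y$ to $Z$.'' This follows from the no-skipping description of single-peaked rankings given earlier in the paper. The argument also needs an odd number of voters, so that ``beats'' means a strict majority and no ties occur; we assume this throughout, as the paper does.
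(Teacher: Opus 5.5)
Your argument is correct, and it takes a genuinely different route from the paper's.

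\textbf{The paper's route.} The paper works inside the Condorcet (Copeland) ranking $C_1 \succ C_2 \succ \cdots$. Its key fact is a monotonicity inclusion: if $C'$ lies between $C_1$ and $C_2$, then every voter preferring $C_1$ to $C_2$ also prefers $C'$ to $C_2$. So $C'$ would beat $C_2$, contradicting the ranking. The paper then iterates down the ranking. It relies implicitly on the preceding characterization of size-$k$ Gehrlein-stable sets as top-$k$ sets, and hence on transitivity of the majority relation.

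\textbf{Your route.} Your key fact is a disjointness statement instead. When $Y$ lies between $X$ and $Z$, the voters who prefer $X$ to $Y$ and those who prefer $Z$ to $Y$ are disjoint, so a middle candidate cannot lose to both flanks. Applied directly to the definition, this gives the interval property in one step for every Gehrlein-stable set of any size. It needs neither the median voter theorem nor the top-$k$ characterization. It does not even need $N$ odd: Gehrlein-stability requires strict wins, and two disjoint sets cannot each exceed $N/2$ whatever the parity.

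\textbf{A weak spot your route avoids.} Consecutive candidates in the Condorcet ranking need not be adjacent. Take $A,B,C$ left to right with ranking $B \succ C \succ A$. So the paper's ``$C_2$ must be adjacent to $C_3$'' has to be read as ``$C_{j+1}$ is adjacent to the interval $\{C_1,\dots,C_j\}$.'' The same betweenness argument does establish that reading.

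\textbf{What each route buys.} The paper's route gives the picture of the Condorcet ranking growing an interval one endpoint at a time. That picture also follows from your argument, since every top-$j$ set is Gehrlein-stable.

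One small point: your opening reduction to top-$k$ sets is never used and can simply be deleted.
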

\begin{proof}
Suppose the candidates have Copland ranking  $C_1 \succ C_2 \succ C_3 \succ \cdots,C_n$. Without loss of generality, assume $C_2$ lies to the right of $C_1$. Suppose $C_1$ and $C_2$ are not adjacent and let $C^\prime$ be the candidate just to the left of $C_2$. Any voter who prefers $C_1$ to $C_2$ must also prefer $C^\prime$ to $C_2$ (since $C^\prime$ lies between them). Hence in a head-to-head contest, $C^\prime$ does at least as well against $C_2$ as $C_1$ does. Since $C_1$ beats $C_2$, so must $C^\prime.$ But this contradicts $C_2$ being ahead of $C^\prime$ in the ordering. Hence $C_1$ and $C_2$ must be adjacent. The same argument shows $C_2$ must be adjacent to $C_3$, and so on.
\end{proof}

Thus, any set of candidates $W$ that is non-adjacent cannot be Gehrlein-stable, implying, there is a candidate not in $W$ that beats some candidate in $W$ in a head-to-head contest, as illustrated below.

\begin{example}\label{ex_center_squeeze} Suppose there are 5 candidates $A, B, C, D$ and $E$, arranged left-to-right. Suppose there are 61 voters with preference profile as shown in Table \ref{ex_nonadj}.  \begin{table}[h]
\begin{center}
 $$ \begin{array}{cccc}
20 & 10 & 11 & 20  \\
\hline
B & C & C & D \\
A & B & D & E \\
C & A & E & C \\
D & D & B & B \\
E & E & A & A
\end{array}$$ 
\end{center}
\caption{A preference profile with a non-adjacent winning set.}
\label{ex_nonadj} 
\end{table}
Under Bloc voting with $k=2$, the winning set is are $BD.$  However, in a head-to-head contest, $C$ beats both $B$ and $D$ and is the Condorcet winner.
\end{example}

Example \ref{ex_center_squeeze} illustrates a phenomenon that  occurs frequently  in  multi-winner elections: the center squeeze, where a centrist candidate loses out  to candidates on either side of it.  The center squeeze cannot occur under the Copland method. (In Example \ref{ex_center_squeeze}, the Copland winners are $CD$.) In fact, since the Copland method selects as winners the top $k$ candidates in the Condorcet ranking, we have the following. 
\begin{corollary}
Any winning set under the Copland method is adjacent and Gehrlein-stable.
\end{corollary}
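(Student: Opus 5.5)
The plan is to read the corollary off the structure of Copland scores under single-peaked preferences, combined with the earlier Proposition on Condorcet and Gehrlein-stable sets and Proposition~\ref{Geh_adj}. Throughout I keep the standing assumptions of this section: preferences are single-peaked and the number of voters $N$ is odd.

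The first step is to show that a Copland winning set of size $k$ is unique and equals the top $k$ candidates of the Condorcet ranking. By the Median Voter Theorem discussion, the head-to-head results form a strict linear order $C_1 \succ C_2 \succ \cdots \succ C_m$. Here $C_i$ beats exactly the candidates $C_{i+1},\dots,C_m$, and it loses to $C_1,\dots,C_{i-1}$. Since $N$ is odd, there are no half points. So $C_i$ has Copland score exactly $m-i$, and the scores $m-1, m-2, \dots, 0$ are pairwise distinct. Selecting the $k$ candidates with highest Copland scores therefore gives no ties, and the selection is exactly $\{C_1,\dots,C_k\}$.

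The second step applies the earlier Proposition characterizing Gehrlein-stable sets under single-peaked preferences. A set of size $k$ is Gehrlein-stable if and only if it consists of the top $k$ candidates in the Condorcet ranking. I can also see this directly: for any $i \le k < j$, $C_i$ beats $C_j$. Either way, the Copland winning set is Gehrlein-stable. The third step invokes Proposition~\ref{Geh_adj}, which says every Gehrlein-stable set consists of adjacent candidates, so the Copland winning set is adjacent.

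The only delicate point is the meaning of "adjacent" in Proposition~\ref{Geh_adj}. Its proof really shows that each $C_{i+1}$ is adjacent to the block formed by $C_1,\dots,C_i$, rather than literally to $C_i$. To make this airtight, I would run the same argument inductively. Assume $\{C_1,\dots,C_i\}$ occupies a contiguous interval. Suppose $C_{i+1}$ lies, say, to the right of this interval but is not the candidate immediately to its right. Let $C'$ be the candidate just to the left of $C_{i+1}$; then $C'$ lies strictly between the interval and $C_{i+1}$. Every voter preferring $C_1$ to $C_{i+1}$ also prefers $C'$ to $C_{i+1}$, since $C'$ lies between them. Hence $C'$ beats $C_{i+1}$. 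But $C' \notin \{C_1,\dots,C_i\}$, so $C'$ sits below $C_{i+1}$ in the Condorcet ranking, which is a contradiction. Thus each initial segment $\{C_1,\dots,C_k\}$ is a contiguous interval. This confirms the corollary without relying on any informal reading of the earlier proof.
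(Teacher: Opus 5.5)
Your proof is correct and follows the paper's own route: the paper justifies the corollary only by noting that the Copland method selects the top $k$ candidates of the Condorcet ranking, which are Gehrlein-stable by the preceding proposition and therefore adjacent by Proposition~\ref{Geh_adj}. Your inductive argument for adjacency is a real improvement. The paper's step ``the same argument shows $C_2$ must be adjacent to $C_3$'' is not literally true, since $C_{i+1}$ need only be adjacent to the block $\{C_1,\dots,C_i\}$. Your version proves exactly the contiguity the corollary needs.
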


Gehrlein-stability, while compelling, does not place any restrictions on {\it who} prefers one candidate to the other. In Example \ref{ex_center_squeeze}, for instance, the  voters who prefer $C$ to $B$ are the 41 voters on the right-hand side. The voters who prefer $C$ to $D$ are the 40 voters on the left-hand side. The only voters who prefer $C$ in  head-to-head contests with both winners are the 21 voters in the center.

Elkind et al \cite{HE17}  proposed an alternative notion to Gehrelein-stability to capture situations when a ``block'' of voters  prefer a losing candidate to the winning candidates.  

\begin{definition}
A set of candidates $W$ is locally stable for quota $q$ if there exists no set of voters of size at least $q$ and non-winning candidate $C$ such that all voters in the set prefer $C$ to each candidate in $W$.  
\end{definition}

Alternatively, a winning set is not locally stable for quota $q$ if there exists a set of voters of size at least $q$ and a non-winning candidate $C$ such that all voters in the set prefer $C$ to each candidate in $W$.
The authors in  \cite{HE17} discuss the consequences of selecting different values of $q$, suggesting the Droop quota $\hat{q}= \lfloor \frac{N}{k+1}\rfloor+1$, used in single transferable vote systems. (The Droop quota ensures no more than $k$ candidates can meet or exceed $\hat{q}$  first-place votes; hence any candidate meeting this quota ``deserves'' to win.)

In this paper, we let $q=\lfloor \frac{N}{2}\rfloor+1$ for all $k$ values. Hence we define locally stable sets as those for which there is no non-winning candidate preferred by a majority of voters to each of the winners. 
Under this definition, Gehrlein-stability implies local stability, since if a set is Gehrlein-stable, no losing candidate can be majority-preferred to any winning candidate.
Example \ref{ex_center_squeeze} provides an example of a winning set $BD$ that is not Condorcet but is  locally stable since, although $C$ is the Condorcet winner, there is no  group of at least 31 voters who prefer $C$ to both $B$ and $D$. (It is not locally stable if $q$ is the Droop quota $\hat{q}= \lfloor \frac{61}{2+1}\rfloor+1= 21.$)

\section{Bloc Voting with Four or Five Candidates}\label{sec_4_and_5}
In this section, we classify the winning coalitions that can arise under Bloc voting when there are four or five candidates and voters have single-peaked preferences.  We also investigate when these coalitions are adjacent, and when they are Gehrlein-stable, Condorcet, or locally stable.

With four candidates, there are 8 possible rankings  under single-peaked preferences. If the  candidates,  arranged left to right, are labeled  $A, B, C,$ and $D$, then the preference profile can be described as  in Table \ref{4_cand}
\begin{table}[h]
\begin{center}
$$\begin{array}{ccccccccc}
x_1 &x_2 & x_3 & x_4 & x_5 & x_6 & x_7 & x_8\\
\hline
A & B& B & B& C & C & C& D\\
B&  A & C& C &B & B & D & C\\
C&  C& A& D&  A & D & B & B\\
D & D& D& A & D &  A &  A&A
\end{array}$$
\end{center}
\caption{A preference profile with four candidates.}
\label{4_cand}
\end{table}
where $x_1+ x_2+ \cdots + x_{8}=N.$ If $k=2$, the Bloc vote totals for each candidate are:
\begin{align*}
T_A &= x_1+x_2,\\
T_B & = x_1+x_2+ \cdots + x_6, \\
T_C & = x_3+x_4+ \cdots + x_{8} \mbox{ and} \\
T_D & =  x_{7}+x_{8}.\\
\end{align*}
Since $T_A \le T_B,$ if $A$ is a winner under Bloc voting then so is $B$. Likewise, $T_D\le T_C$ so if $D$ is a winner  so is $C$. Thus the only possible winning sets are $AB$, $BC$, and $CD.$ If $k=3$, a similar argument shows the only winning sets are $ABC$ and $BCD.$

With five candidates, $A, B, C, D,$ and $E$, arranged left-to-right, there are 16 possible rankings, with preference profile  as shown in Table \ref{5_cand}, where $x_1+x_2+ \cdots + x_{16}=N$. 
\begin{table}[h]\begin{center}
$$\begin{array}{cccccccccccccccc}
x_1 &x_2 & x_3 & x_4 & x_5 & x_6 & x_7 & x_8 & x_9 &x_{10} & x_{11} & x_{12} & x_{13} & x_{14} & x_{15} & x_{16}\\
\hline
A&B & B & B & B &  C &C &C &C &C &C &D& D&D&D& E\\
B&A& C&C&C&B&B&B&D&D&D&C&C&C&E&D\\
C&C&A&D&D&A&D&D&B&B&E&B&B&E&C&C\\
D&D&D&A&E&D&A&E&A&E&B&A&E&B&B&B\\
E&E&E&E&A&E&E&A&E&A&A&E&A&A&A&A
\end{array}$$
\end{center}
\caption{A preference profile with five candidates.}
\label{5_cand}
\end{table}
 If $k=2$, $T_A \le T_B$ and $T_E\le T_D$. So if $A$ (respectively $E$) is a winner, so is $B$ (respectively $D$). Thus the only possible winning sets are: $AB, BC, BD, CD,$ and $DE$. If $k=3$, $T_A \le T_B \le T_C$ and $T_E\le T_D\le T_C$, so the only possible winning sets are: $ABC, BCD,$ and $CDE$. If $k=4$, the only possible winning sets are: $ABCD$ and $BCDE.$
Thus, all winning sets consist of adjacent candidates with the exception (when $k=2$) of $BD.$

\subsection{Head-to-Head Contests}

To analyze which winning sets are Gehrlein-stable, we must determine the winner of each head-to-head contest between a winning and a non-winning candidate.  
For any two candidates $C_i$ and $C_j$, let $T_{C_iC_j}$ denote the number of voters who prefer $C_i$ to $C_j$. For $C_i$ to win over $C_j$ in a head-to-head contest requires $T_{C_iC_j}>T_{C_jC_i}$ or, equivalently (since we are assuming $N$ is odd), $T_{C_iC_j} > N/2$.

\begin{proposition}\label{Geh-stable}
For four candidates, if $k=2$ or $3$, all winning coalitions under Bloc voting are Gehrlein-stable. For five candidates, if $k=2, 3,$ or $4$, all winning coalitions under Bloc voting are Gehrlein-stable except (when $k=2$) $AB$, $BD$, and $DE.$
\end{proposition}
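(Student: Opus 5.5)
The plan is to argue case by case, using the lists of possible winning sets derived just before the proposition. The main tool is a \emph{complementarity} observation. Under single-peaked preferences, each voter's top $k$ candidates form a block of $k$ adjacent candidates containing the voter's peak. Fix a winner $Y$ and a non-winner $L$. I will look for a winner $Y'$ on the far side of $Y$ from $L$ (possibly $Y'=Y$) with two properties. First, every voter who includes $Y'$ in their top $k$ excludes $L$, and vice versa. Second, the two groups of voters together exhaust the electorate. Then $T_{Y'}+T_L=N$. Since $Y'$ wins and $L$ loses, $T_{Y'}>T_L$, so $T_{Y'}>N/2$.

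Every voter who votes for $Y'$ but not $L$ prefers $Y'$ to $L$. So $T_{Y'L}\ge T_{Y'}>N/2$, and $Y'$ beats $L$. Finally, $Y$ lies between $Y'$ and $L$ (or equals $Y'$). So every voter preferring $Y'$ to $L$ also prefers $Y$ to $L$, which gives $T_{YL}\ge T_{Y'L}>N/2$. This is the same betweenness argument used in Proposition~\ref{Geh_adj}.

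For four candidates I will check this in the variables of Table~\ref{4_cand}.
\begin{itemize}
\item For $k=2$ and $W=AB$: $T_A=x_1+x_2$ and $T_C=x_3+\cdots+x_8$ are complementary. Hence $A$ and $B$ beat $C$, and a fortiori $D$.
\item For $W=BC$: the comparison of $T_B$ with $T_D$ and of $T_C$ with $T_A$ should give complementary pairs. These handle $B$ and $C$ against $D$ and $A$ respectively. The remaining contests ($B$ vs.\ $A$, $C$ vs.\ $D$) then follow by betweenness, or by a direct complementary pair.
\item For $W=CD$ the case is symmetric to $AB$.
\item For $k=3$ and $W=ABC$: every top-3 block is $ABC$ or $BCD$, so $T_A+T_D=N$, and betweenness finishes. $BCD$ is symmetric.
\end{itemize}

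For five candidates I will use Table~\ref{5_cand} in the same way.
\begin{itemize}
\item For $k=2$ and $W=BC$: $T_B=x_1+\cdots+x_8$ and $T_D=x_9+\cdots+x_{16}$ are complementary, so $B$ beats $D$ and hence $E$. Symmetrically, $C$ beats $A$. The contests $B$ vs.\ $E$ and $C$ vs.\ $D,E$ must also be checked. I expect the pair ($T_C$, $T_A$ or $T_E$) to be complementary in the appropriate sense, or betweenness to reduce these to contests already won. $CD$ is symmetric.
\item For $k=3$: top-3 blocks are $ABC$, $BCD$ or $CDE$. For $W=ABC$, $T_A+T_D$ is not obviously $N$. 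Instead I would use $T_C\ge$ and $T_B+T_E=N$ (a top-3 block containing $B$ excludes $E$ and vice versa). This gives that $B$ beats $E$. I will treat the contests against $D$ similarly by a pairing, then apply betweenness. $BCD$ and $CDE$ are handled the same way.
\item For $k=4$: $T_A+T_E=N$ for the pair of extreme candidates, which settles $ABCD$ and $BCDE$ at once.
\end{itemize}

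For the exceptions, it suffices to exhibit profiles. Example~\ref{ex_center_squeeze} already shows that $BD$ can fail. For $AB$ I will build a profile where $A,B$ have the top Bloc totals but $C$ beats $A$. An example is many $B\succ A$ voters plus a large mass of right-peaked voters whose second choices spread over $C,D,E$. $DE$ follows by reflection.

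The main obstacle is the five-candidate $k=2$ and $k=3$ cases. There a single complementary pair may not cover every (winner, loser) contest, especially the contests between a winner and a loser adjacent to it. If no pairing works, I would instead write $T_{YL}$ explicitly as a sum of the $x_i$ and compare it with $N/2$ using the defining inequalities $T_Y>T_L$ for the winning set.
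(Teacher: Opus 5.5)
Your complementarity device is the same mechanism the paper uses (its identities $T_{DB}=T_D$ and $T_{BD}=T_B$, followed by betweenness). It settles every four-candidate case and, with the right pairings, the five-candidate cases $k=3,4$.

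It fails, however, exactly where you wave it through. For five candidates, $k=2$, $W=BC$, you assert ``Symmetrically, $C$ beats $A$.'' Reflection sends $BC$ to $CD$, so no symmetry is available inside this case, and your lemma does not reach the contests against $A$. The lemma needs a winner whose voters and $A$'s voters are disjoint and together exhaust the electorate. $B$ shares the voters $x_1,x_2$ with $A$. No voter votes for both $A$ and $C$, but voters whose top two are $D,E$ ($x_{15},x_{16}$) vote for neither. So $T_A+T_C$ may be less than $N$, and $T_C>T_A$ does not give $T_C>N/2$. The same failure occurs for $W=CD$ against $E$, so your expectation that $(T_C,T_A)$ or $(T_C,T_E)$ is ``complementary in the appropriate sense'' is not borne out. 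The contests $B$ vs.\ $E$ and $C$ vs.\ $D,E$ are fine, since they follow from $B$ beating $D$ by betweenness.

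The missing idea is a one-sided version of your lemma. Disjointness alone gives $T_A+T_C\le N$, hence $T_A<N/2$. Because $A$ is the extreme candidate, every voter who prefers $A$ to $C$ has $A$ among their top two (these are exactly $x_1,x_2$), so $T_{AC}=T_A<N/2$. This is the paper's chain $T_{AC}=T_A<T_C\le T_{CA}$, after which $B$ beats $A$ by betweenness. Your fallback of expanding in the $x_i$ would find this, but you never direct it at this contest.

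Two smaller corrections. First, for $k=3$, $T_A+T_D$ is exactly $N$, since each top-three block $ABC$, $BCD$, $CDE$ contains exactly one of $A$ and $D$. This is the pairing you need for the contests against $D$ when $W=ABC$, because the pair $(B,E)$ says nothing about $D$.

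Second, your recipe for an $AB$ counterexample fails as stated. Suppose that, apart from the $B\succ A\succ\cdots$ voters, everyone is right-peaked with both top choices in $\{C,D,E\}$. Then every voter preferring $C$ to $A$ votes for $D$, so $C$ beating $A$ forces $T_D>N/2>T_A$, and $A$ is not a winner. You need voters whose top two are $B$ and $C$, as in Table~\ref{5cand_counter} (Left). The paper supplies these counterexamples in Proposition~\ref{prop_5_2} rather than in this proof.
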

\begin{proof}
We prove the case for five candidates; the proof for four candidates is analogous. Suppose, first, that $k=2$ and that the winning set is $BC$. Consider a head-to-head contest between $B$ and $D.$ Since $B$ is a Bloc winner but $D$ is not,

$$T_{DB} = x_9+x_{10}+ + \cdots + x_{16} =T_D<T_B = x_1+ x_2+ \cdots + x_8 = T_{BD}.$$
So $D$ cannot win against $B$ in a head-to-head contest. Next, consider a head-to-head contest between $C$ and $D.$ Since $T_{DC} =x_{12}+ \cdots x_{16} < T_{DB},$ we have
$T_{DC}< T_{DB} < T_{BD} < T_{CD}.$ So $D$  cannot win against $C$ in a head-to-head contest either.
Since $E$ fares worse than $D$ in head-to-head contests with $B$ or $C$ ($T_{EB}<T_{DB}$ and $T_{EC}<T_{DC}$), $E$ also cannot win against either winner.

It remains to consider $A.$ In a head-to-head contest between $A$ and $C$, since $C$ is a Bloc winner but not $A$,  
$$T_{AC} =x_1 +x_2 =  T_A < T_C=x_3+ \cdots + x_{14} < x_3+ \cdots + x_{16} = T_{CA}.$$
So $A$ cannot win against $C.$ 
Additionally, $T_{AB} < T_{AC}< T_{CA} < T_{BA}$, which implies $A$ cannot win against $B$ either.
Thus the winning coalition $BC$ is Gehrlein-stable. 
By symmetry, the same argument applies to $CD.$

Now consider $k=3$. Suppose the winning set is $ABC$. Since $A$ is a Bloc winner and $D$ is not,  

$$T_{DA} = x_4+x_5+x_7+x_8+ \cdots x_{16} =T_D<T_A =x_1+\cdots + x_3+x_6=T_{AD}.$$
Thus $D$ cannot beat $A$ in a head-to-head contest, and since $D$ fares worse against both $B$ and $C$, $D$ cannot win against any winner. The same is true for $E.$ Thus $ABC$ is Gehrlein-stable. By symmetry, the same holds for $CDE.$

Finally, suppose the winning set is  $BCD.$ Since $D$  is a Bloc winner but not $A$,
 $$T_{AD}  =T_A <T_D= T_{DA},$$
so $A$ cannot win against $D.$ Since $A$ performs worse against $B$ and $C$, it cannot win against any winning candidate. By symmetry, the same is true of $E$; thus $BCD$ is also Gehrlein-stable.

\end{proof}

 \begin{proposition}\label{prop_5_2}
 Suppose $m=5$ and $k=2$. Under Bloc voting:
 \begin{enumerate}
 \item If the winning set is $AB$ then $B$ cannot be beaten in a head-to-head contest with $D$ or $E$, but there exists a profile in which $B$ can be beaten in a head-to-head contest with $C$ and in which $A$ can be beaten by all three non-winning candidates. An analogous statement is true for $DE.$
 \item If the winning set is $BD$, then $B$ cannot be beaten by $A$, and $D$ cannot be beaten by $E$. Furthermore, there exists a profile in which: (i) $E$ beats $B$ and $C$ beats both $B$ and $D$ in head-to-head contests; and another profile in which (ii) $A$ beats $D$ and $C$ beats both $B$ and $D$ in head-to-head contests.
 \end{enumerate} 

 \end{proposition}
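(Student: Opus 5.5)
The positive claims follow from comparing Bloc totals with head-to-head totals. The existence claims follow from explicit profiles built on a few columns of Table~\ref{5_cand}. With $k=2$ the totals are $T_A=x_1+x_2$, $T_B=x_1+\cdots+x_8$, $T_C=x_3+\cdots+x_{14}$, $T_D=x_9+\cdots+x_{16}$ and $T_E=x_{15}+x_{16}$.

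For part 1, suppose the winning set is $AB$. Every voter peaked at $A$ or $B$ or $C$ ranks $B$ above $D$, and every voter peaked at $D$ or $E$ ranks $D$ above $B$. Hence $T_{BD}=x_1+\cdots+x_8=T_B$ and $T_{DB}=T_D$. Since $B$ wins and $D$ does not, $T_{DB}=T_D<T_B=T_{BD}$, so $D$ cannot beat $B$. As in the proof of Proposition~\ref{Geh-stable}, $T_{EB}\le T_{DB}$, so $E$ cannot beat $B$ either.

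For the existence claim I will use weight on the columns $AB$ ($x_1$), $CB\ldots$ ($x_6$) and $ED$ ($x_{16}$). The choice $x_1=31$, $x_6=20$, $x_{16}=20$ gives $N=71$ and Bloc totals $T_A=31$, $T_B=51$, $T_C=T_D=T_E=20$, so the winners are $AB$. Then $T_{CB}=x_6+\cdots+x_{16}=40>35.5$, so $C$ beats $B$. Also $T_{XA}=N-x_1=40$ for each of $X=C,D,E$, so $A$ loses to all three non-winners. The statement for $DE$ follows by the left–right reflection $A\leftrightarrow E$, $B\leftrightarrow D$.

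For part 2, suppose the winning set is $BD$. Only the $x_1$ voters rank $A$ above $B$, so $T_{AB}=x_1\le T_A<T_B\le N-x_1=T_{BA}$. Hence $A$ cannot beat $B$, and by reflection $E$ cannot beat $D$.

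For (i) I want $T_C$ small while many voters rank $C$ and $E$ above $B$. The useful columns are $BA$ ($x_2$), $BCAD\ldots$ ($x_3$), $CDEB\ldots$ ($x_{11}$), $DE$ ($x_{15}$) and $ED$ ($x_{16}$). I will take $x_2=15$, $x_3=15$, $x_{11}=12$, $x_{15}=14$, $x_{16}=15$, so $N=71$. The Bloc totals are $T_A=15$, $T_B=30$, $T_C=27$, $T_D=41$ and $T_E=29$, so the winners are $BD$. The head-to-head counts are then:
\begin{itemize}
\item $T_{CB}=x_{11}+x_{15}+x_{16}=41$, so $C$ beats $B$;
\item $T_{CD}=x_2+x_3+x_{11}=42$, so $C$ beats $D$;
\item $T_{EB}=x_{11}+x_{15}+x_{16}=41$, so $E$ beats $B$.
\end{itemize}
All three exceed $35.5$. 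Profile (ii) is obtained by reflecting (i). The reflection fixes $C$ and the set $BD$, and turns ``$E$ beats $B$'' into ``$A$ beats $D$''.

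The main obstacle is the constructions in part 2. Every $B$-peaked voter also gives a Bloc vote to $A$ or $C$, so $B$ must outpoll both of them while $C$ still wins head-to-head. Splitting the $B$ voters between $x_2$ and $x_3$ is what makes this possible.
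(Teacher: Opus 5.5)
Part 2 of your proposal is fine, but the existence construction in part 1 fails as written.

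\textbf{The part 1 profile.} The column $x_6$ is $C\succ B\succ A\succ D\succ E$, so those 20 voters rank $A$ third, above both $D$ and $E$. Your claim $T_{XA}=N-x_1$ holds for $X=C$, since only the $x_1$ and $x_2$ voters rank $A$ above $C$. It fails for $X=D$ and $X=E$: in your profile $T_{AD}=T_{AE}=x_1+x_6=51$. So $A$ beats $D$ and $E$ by $51$ to $20$, and only one of the three non-winners beats $A$.

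The fix is to place the 20 $C$-peaked voters on column $x_8$ ($C\succ B\succ D\succ E\succ A$) instead. The Bloc totals are unchanged ($T_A=31$, $T_B=51$, $T_C=T_D=T_E=20$), and $T_{CB}=40$ still. Now every voter outside $x_1$ ranks $A$ last, so $C$, $D$ and $E$ each beat $A$ by $40$ to $31$. This is essentially the paper's choice: its profile uses columns $x_1,x_8,x_{11},x_{15}$ with weights $101,50,50,50$.

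\textbf{The identity $T_{BD}=T_B$.} Your justification contains a false sentence. Not every $C$-peaked voter ranks $B$ above $D$: types $x_9,x_{10},x_{11}$ rank $D$ second. The identities $T_{BD}=x_1+\cdots+x_8=T_B$ and $T_{DB}=T_D$ are still correct, but for a different reason. With $k=2$, each voter's top two is one of the adjacent pairs $AB$, $BC$, $CD$, $DE$. Each such pair contains exactly one of $B$ and $D$, and that one is ranked above the other.

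The same fact gives $T_B+T_D=N$. You need this for the unjustified step $T_B\le N-x_1$ in part 2: since $x_1\le T_A<T_D$, it follows that $T_B=N-T_D<N-x_1$.

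\textbf{What checks out.} Your profile (i) in part 2 is correct. The Bloc totals are $T_D=41$, $T_B=30$, $T_E=29$, $T_C=27$, $T_A=15$. The head-to-head counts are $T_{CB}=T_{EB}=41$ and $T_{CD}=42$, each above $N/2=35.5$. The reflection arguments for (ii) and for $DE$ are fine.
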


\begin{proof} 1. 
Suppose the winning set is $AB$. As shown in the proof of Proposition \ref{Geh-stable}, if $B$ is a Bloc winner but not $D$, then
$$T_{EB}<T_{DB} =T_D< T_B = T_{BD}< T_{BE}.$$
Hence neither $D$ nor $E$ can win in a head-to-head against $B.$

Now consider the  profile in Table \ref{5cand_counter} (Left) with $251$ voters. The winners are $A$ and $B$ with $101$ and $151$ votes respectively. However, both $D$ and $E$ beat $A$   and $C$ beats both  $A$ and $B$ in head-to-head contests ($151$ to $101$ in all cases).  

\begin{table}[h]
$$\begin{array}{cccc}
101 & 50 & 50 & 50   \\
\hline
A & C & C & D \\
B & B & D & E \\
C & D & E & C \\
D & E & B & B \\
E & A & A & A  \\
\end{array}
\hskip 1cm
 \begin{array}{ccccc}
 101 & 1 & 1 & 2 & 100  \\
\hline
 B & C & C & D & D \\
 A & B& D & C & E \\
 C & D& E & E & C \\
 D & E& B & B & B\\
E   &A&  A & A & A
\end{array} $$
\caption{Preferences in which non-winning candidates beat winning candidates in head-to-head contests when the winning set is $AB$ (Left) or $BD$ (Right).}\label{5cand_counter}
\end{table}
2. Suppose the winning set is $BD$. Since $B$ is a Bloc winner but not $A$, $A$  cannot beat $B$ in a head-to-head  because 
$$T_{AB} =x_1 < T_A<T_B< x_2+ \cdots + x_{16} = T_{BA}.$$
Similarly, $E$ cannot beat $D$. Now consider the  profile in Table \ref{5cand_counter} (Right) with  $205$ voters. The Bloc winners are, $B$ and $D$ with $102$ and $103$ votes respectively. In head-to-head contests, $E$ beats $B$ (103 to 102) and $C$ beats both $B$ (104 to 101)  and $D$ (103 to 102).

In the symmetrically reversed profile, $A$ beats $D$ and $C$ beats both $B$ and $D$ in head-to-head contests.  

 \end{proof}

Proposition \ref{prop_5_2} shows that with five candidates, if the winning set is $AB$, $DE$, or $BD,$ the winning set may not be Gehrlein-stable, or even Condorcet, since the centrist candidate may be majority-preferred to either winner. 

Moreover, the counterexample in the proof of (1) demonstrates that if the winning set is $AB$ (and by symmetry $DE$), it may not be locally stable. 
This is not the case if the winning set is $BD.$

\begin{proposition}\label{BD_set} If there are 5 candidates and $k=2,$ then all winning sets under Bloc voting are locally stable except $AB$ and $DE.$ 
\end{proposition}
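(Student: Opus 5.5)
The possible winning sets for $m=5$, $k=2$ are $AB$, $BC$, $BD$, $CD$ and $DE$. The negative part of the statement is already established: the left profile in Table \ref{5cand_counter} has winning set $AB$, and there $C$ is preferred to both $A$ and $B$ by the $151>125$ voters in columns 2--4. So $AB$ fails local stability, and by symmetry so does $DE$. By Proposition \ref{Geh-stable}, $BC$ and $CD$ are Gehrlein-stable, and Gehrlein-stability implies local stability, as noted after the definition. Only $BD$ remains.

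Suppose the winning set is $BD$, and assume for contradiction that some non-winning candidate is preferred to both $B$ and $D$ by more than $N/2$ voters. I will handle the candidates $A$, $E$ and $C$ in turn.

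\emph{Candidate $A$.} A voter who prefers $A$ to $B$ must have $A$ on top, so by single-peakedness only column $x_1$ qualifies. Proposition \ref{prop_5_2}(2) already gives $T_{AB}<T_{BA}$, so fewer than $N/2$ voters prefer $A$ even to $B$ alone. The case of $E$ is symmetric.

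\emph{Candidate $C$.} A voter prefers $C$ to both $B$ and $D$ exactly when $C$ is ranked first, since a voter with peak $B$ or $D$ ranks that peak above $C$, and a voter with peak $A$ or $E$ ranks $B$ or $D$ above $C$. So the relevant count is the number of first-place votes for $C$, namely $F_C=x_6+\cdots+x_{11}$. The goal is to show $F_C<N/2$ using the Bloc inequalities $T_C<T_B$ and $T_C<T_D$. The voters with $C$ first are split between those ranking $B$ second ($x_6,x_7,x_8$) and those ranking $D$ second ($x_9,x_{10},x_{11}$). Every voter with $C$ in their top two contributes to $T_C$, while $B$ gets votes from $x_1,\dots,x_5$ and from $x_6,x_7,x_8$, with $x_6+x_7+x_8$ common to both $T_B$ and $T_C$. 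Cancelling this common part in $T_B>T_C$ gives
\[
x_1+x_2 > x_9+x_{10}+x_{11}+x_{12}+x_{13}+x_{14}+x_{15}\text{-part with } C \text{ second}.
\]
More carefully, $T_C=x_3+x_4+x_5+x_6+\cdots+x_{11}+x_{12}+x_{13}+x_{14}$ and $T_B=x_1+\cdots+x_8$. Hence $T_B>T_C$ gives $x_1+x_2>x_9+x_{10}+x_{11}+x_{12}+x_{13}+x_{14}$, and symmetrically $T_D>T_C$ gives $x_{15}+x_{16}>x_3+x_4+x_5+x_6+x_7+x_8$. Adding these,
\[
x_1+x_2+x_{15}+x_{16} > F_C + (x_3+x_4+x_5)+(x_{12}+x_{13}+x_{14}) \ge F_C.
\]
The left side consists of voters whose first choice is not $C$, so $N-F_C>F_C$ and therefore $F_C<N/2$. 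Thus no majority prefers $C$ to both winners, and $BD$ is locally stable.

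The main obstacle is this last step: writing $T_B$, $T_C$, $T_D$ correctly from Table \ref{5_cand} and choosing the combination of the two Bloc inequalities that isolates $F_C$. I would double-check the column memberships against the table before finalizing.
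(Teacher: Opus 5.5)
Your proof is correct. It follows the same basic route as the paper. You write $T_B$, $T_C$, $T_D$ in the coordinates of Table~\ref{5_cand}, observe that the only voters preferring $C$ to both $B$ and $D$ are the $F_C=x_6+\cdots+x_{11}$ voters who rank $C$ first, and use the fact that $C$ is not a Bloc winner to bound $F_C$.

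The difference is in the key step, and it matters. The paper uses only $T_B>T_C$, obtaining $x_1+x_2>x_9+\cdots+x_{14}$. From this it deduces $x_9+\cdots+x_{14}<N/2$. That route goes through the misstated identity $x_1+\cdots+x_8=N-(x_9+\cdots+x_{14})$, which drops $x_{15}+x_{16}$, although the inequality still holds. The paper then asserts $x_6+\cdots+x_{11}<N/2$. That last inference does not follow, because $x_6,x_7,x_8$ are not in the bounded sum.

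In fact no argument using $T_B>T_C$ alone can work. Take $x_1=1$, $x_6=100$ and all other $x_i=0$: then $T_B>T_C$ holds while $F_C=100>N/2$. What rules this out is $T_D>T_C$. Your step of adding the two inequalities,
$x_1+x_2+x_{15}+x_{16}>x_3+\cdots+x_{14}\ge F_C$,
is exactly the needed repair. It is also the $m=5$ instance of the two-sided argument in Proposition~\ref{gaps}. You additionally treat $A$, $E$ and the failure of $AB$ and $DE$ explicitly, which the paper leaves to earlier remarks.

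A few small cleanups:
\begin{itemize}
\item Delete the first, garbled display (the one ending in ``$x_{15}$-part with $C$ second''); the ``more carefully'' computation supersedes it.
\item In the left profile of Table~\ref{5cand_counter}, columns 2--4 contain $150$ voters, not $151$. The paper's ``151 to 101'' is a typo, since $N=251$. The count is still a majority, as $150>125.5$.
\item For candidate $A$, citing Proposition~\ref{prop_5_2} is fine. A self-contained one-liner is $T_{AB}=x_1\le T_A<T_D\le N-x_1$.
\end{itemize}
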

\begin{proof}
If the winning set is $BC$ or $CD$ then it is Gehrlein-stable and hence locally stable. If the winning committee is $BD$ it is possible for $C$ to beat beat both $B$ and $D.$ However, for a majority of voters to prefer $C$ to either winner, the quantity $x_6+ \cdots x_{11}$ must correspond to a majority of voters. But since $C$ is not a Bloc winner,
$$T_B = x_1+ \cdots x_8 > x_3+ \cdots + x_{14}=T_C.$$

Subtracting $x_3+ \cdots + x_8$ from both sides, we obtain

$$x_1+x_2> x_9+ \cdots +x_{14},$$

which implies 
$$x_1+x_2+ \cdots + x_8 > x_9+ \cdots +x_{14}.$$
Since $x_1+ x_2 + \cdots + x_8 = N-(x_9+\cdots + x_{14}),$  this implies 
$$x_9+ \cdots x_{14}<N/2$$
and hence $x_6+ \cdots x_{11}  <N/2.$  Thus there is no majority who prefers $C$ to both $B$ and $D.$
\end{proof}

In the next section, we generalize many of these observations for more than five candidates. 

\section{More than five Candidates}

When there are more than five candidates, it is cumbersome to identify each possible ranking as $x_1, x_2$, etc, and so we adapt a more general notation.
Suppose there are $m$ candidates, arranged left-to-right $C_1, C_2, \ldots, C_m.$ Let $S_i$ be the set of voters who rank $C_i$ among their top $k$ and let $\vert S_i \vert$ denote the number of these voters. Let  $k=1, 2, \ldots, m-1$ be the number of winners required and  recall that there are $N$ voters where $N$ is odd.

\subsection{Winning Sets}
As with four and five candidates, not all sets of $k$ candidates can be winning sets under Bloc voting. 

\begin{proposition}
If voters preferences are single-peaked and $C_i$ is a winning candidate under Bloc voting for some $i \le k$, then $C_{i+1}, C_{i+2}, \ldots, C_k$ are also winning candidates under Bloc voting. If $C_i$ is a winning candidate for some $i \ge m-k+1$, then $C_{m-k+1}, C_{m-k+2},\ldots, C_{i+1}$ are also winning candidates.
\end{proposition}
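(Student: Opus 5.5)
The plan is to show the monotonicity $|S_i| \le |S_{i+1}|$ for every $i<k$. By symmetry (reflecting the spectrum) this also gives $|S_{i}| \le |S_{i-1}|$ for every $i>m-k+1$. Once these inequalities are in hand, the statement follows almost immediately. Suppose $C_i$ is a winner with $i\le k$ and $j$ satisfies $i<j\le k$. Then $|S_j|\ge|S_i|$. If $C_j$ were not a winner, it would have at least as many votes as the winner $C_i$. Since we assume no ties in the Bloc totals (as in the paper's standing assumption), this means $C_j$ must have beaten $C_i$ into the winning set, i.e.\ $C_j$ is a winner. Strictly, the no-ties assumption gives: if $C_j$ loses and $C_i$ wins, then $|S_i|>|S_j|$, contradicting $|S_i|\le|S_j|$. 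Chaining over $j=i+1,\dots,k$ gives the first claim. The second claim follows by the mirror-image argument.

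The key step is therefore an injection from $S_i$ into $S_{i+1}$ for $i<k$, exactly generalizing the observations $T_A\le T_B$ and $T_A\le T_B\le T_C$ made for four and five candidates.

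Take a voter $v\in S_i$. Under single-peakedness, $v$'s top $k$ candidates form a contiguous block $C_a,C_{a+1},\dots,C_{a+k-1}$ of the spectrum. This holds because no ranking can skip a candidate: each voter's top-$\ell$ set is an interval for every $\ell$, which can be shown by induction on $\ell$ from the peak outward. Since $C_i$ lies in this interval, we have $a\le i\le a+k-1$.

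Now split into two cases.
\begin{itemize}
\item If $a+k-1\ge i+1$, then $C_{i+1}$ is also in the interval, so $v\in S_{i+1}$.
\item Otherwise $a+k-1=i$, so the interval is $C_{i-k+1},\dots,C_i$. This requires $i-k+1\ge 1$, i.e.\ $i\ge k$, contradicting $i<k$.
\end{itemize}
Hence $S_i\subseteq S_{i+1}$, so $|S_i|\le|S_{i+1}|$ whenever $i<k$. The mirror argument gives $S_i\subseteq S_{i-1}$ whenever $i>m-k+1$: a voter whose interval ends at or after $C_i$ and contains it must also contain $C_{i-1}$ unless the interval starts at $C_i$, which would need $i+k-1\le m$.

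The main obstacle is just stating cleanly the lemma that each voter's top-$k$ set is a contiguous interval; everything else is bookkeeping. I would prove that lemma first by the no-skipping property: if the top-$k$ set contained $C_a$ and $C_c$ but omitted some $C_b$ with $a<b<c$, then $C_b$ would be ranked below a candidate on each side of it, contradicting single-peakedness.

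A minor point is the index range in the second statement: the chain should read $C_{m-k+1},\dots,C_{i-1}$ (the paper writes $C_{i+1}$, presumably a typo). I would prove the version with all candidates between $C_{m-k+1}$ and $C_i$ being winners.
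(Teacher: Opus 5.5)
Your proposal is correct and takes essentially the same route as the paper. The paper simply asserts $S_i \subset S_k$ for $i \le k$ and concludes $|S_i| \le |S_k|$; your chain $S_i \subseteq S_{i+1} \subseteq \cdots \subseteq S_k$, built on the lemma that each voter's top-$k$ set is an interval, supplies the justification the paper omits and explicitly covers the intermediate candidates $C_{i+1},\dots,C_{k-1}$. You are also right that the second statement should end at $C_{i-1}$, and that the conclusion depends on the no-ties assumption.
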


\begin{proof}
The first statement follows since if $i \le k$ then $S_i \subset S_k$, which implies $\vert S_i \vert \le \vert S_k \vert.$ The second statement follows analogously.
\end{proof}

Thus, as with four and five candidates, if a winning set contains a candidate within $k$ of the left or right extreme, then it must contain adjacent candidates that are more central. For example, if there are six candidates $A, B, C, D, E$ and $F$ then if a winning set contains $B$, it must contain $C$, and if a winning set contains $A$, it must contain $A, B$ and $C$.

Next we consider how the winning candidates must be arranged when $k$ is relatively large compared to $m$. 

\begin{proposition}\label{adj}
If  $k \ge \lceil m/2 \rceil$ then the winning coalition is adjacent. 
\end{proposition}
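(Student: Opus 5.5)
The plan is to reduce the statement to the preceding proposition, using it in the slightly sharper form that its proof actually gives. Fix $k \ge \lceil m/2 \rceil$. Each voter's top $k$ candidates form a contiguous block $C_a, C_{a+1}, \ldots, C_{a+k-1}$ of the left-to-right ordering, since single-peaked preferences cannot skip a candidate.

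First, I will record two containments.
\begin{itemize}
\item If $i < l \le k$, then $S_i \subseteq S_l$. Take a voter whose block $[a, a+k-1]$ contains $i$. Then $a \le i < l \le k \le a+k-1$, so the block also contains $l$.
\item If $m-k+1 \le l < j$, then $S_j \subseteq S_l$. Take a voter whose block contains $j$. Then $a \le m-k+1 \le l < j \le a+k-1$, since $a + k - 1 \le m$ forces $a \le m-k+1$. So the block contains $l$.
\end{itemize}
Hence $|S_i| \le |S_l|$ in the first case and $|S_j| \le |S_l|$ in the second. This is the content of the preceding proposition, read as ``every candidate between $C_i$ and $C_k$'' (respectively ``between $C_{m-k+1}$ and $C_i$'').

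Next, suppose the winning set $W$ is not adjacent. Then there are winners $C_i, C_j$ and a non-winner $C_l$ with $i < l < j$. We assume, as elsewhere in the paper, that there are no ties. Then every winner has strictly more votes than every non-winner, so $|S_l| < |S_i|$ and $|S_l| < |S_j|$.

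The key observation is that $k \ge \lceil m/2\rceil$ gives $m-k+1 \le k+1$. So every index satisfies $l \le k$ or $l \ge m-k+1$. I split into these two cases.
\begin{itemize}
\item If $l \le k$, then $i < l \le k$, and the first containment gives $|S_i| \le |S_l|$. This is a contradiction.
\item If $l \ge k+1$, then $l \ge m-k+1$ and $l < j$, and the second containment gives $|S_j| \le |S_l|$. This is again a contradiction.
\end{itemize}
Therefore no such gap exists, and $W$ is adjacent.

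The main obstacle is bookkeeping rather than depth. The printed second half of the preceding proposition has an index slip (it ends at $C_{i+1}$ where $C_{i-1}$ is meant), so I will re-derive the containments directly as above rather than cite it verbatim. The inequality $m-k+1 \le k+1$, which is exactly where the hypothesis $k \ge \lceil m/2\rceil$ enters, must also be checked. If ties are permitted, the same argument shows that \emph{some} winning set of size $k$ is adjacent. This follows by swapping a tied gap candidate in for the outer winner, but I would state the result under the paper's no-tie convention.
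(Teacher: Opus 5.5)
Your proof is correct and follows essentially the same route as the paper: a non-winner $C_l$ sitting in a gap, with a winner $C_i$ to its left and $l \le k$, forces $S_i \subseteq S_l$, which contradicts $|S_l| < |S_i|$; the case on the right is the mirror image. The differences are cosmetic. The paper writes out only even $m$ and uses symmetry to assume $\ell \le m/2 \le k$, whereas your inequality $m-k+1 \le k+1$ covers both parities at once, and your explicit no-ties assumption states what the paper uses implicitly.
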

\begin{proof}
We prove the case when $m$ is even.
Let $k \ge m/2$ and suppose the winning set is not adjacent. Let  $C_\ell$ be a non-winning candidate such that there are winning candidates to both left and right of $C_{\ell}$. Suppose,
 without loss of generality, that $\ell \le m/2$.  Let $C_i$ be the   winning candidate that is closest to $C_{\ell}$ and to their left. Since $i \le \ell \le k$, any voter who includes $C_i$ among their top $k$ candidates must also include $C_{\ell}$ among their top $k$ candidates.  Hence  $S_i \subset S_{\ell}$, which implies $\vert S_i \vert \le \vert S_{\ell} \vert.$ But this is impossible since $C_i$ is a winner and  $C_\ell$ is not. Thus, the  winning set must be adjacent.  
\end{proof}

\subsection{Gehrlein-stability}
To determine which winning sets are Gehrlein-stable, we must analyze the result of all head-to-head contests between winning and non-winning candidates. For each pair of candidates $C_i$ and $C_j$, let  $S_{i,j}$ be the set of voters who prefer $C_i$ to $C_j$ and let  $\vert S_{i,j}\vert$ be the number of those voters. To prove that $C_i$  wins in a head-to-head contest with $C_j$, it is necessary to show that $\vert S_{i,j}\vert> \vert S_{j,i}\vert$ or, equivalently, that $\vert S_{i,j}\vert> N/2$.

Before analyzing these contests, we note that the linear ordering of the candidates implies certain relationships among the $S_{i, j}$. For instance, suppose a voter prefers $C_4$ to $C_6$. Then they must also prefer $C_4$ to $C_7$, since otherwise their preferences would be $C_7 \succ C_4 \succ C_6$, which is impossible. Thus $S_{4,6} \subset S_{4, 7}$. We generalize this relationship in the following observation.

\textbf{Observation:} 
 Suppose   $i<j<k$ so that $C_i$,  $C_j$, and $C_k$ are arranged  left to right. Then any voter  who prefers $C_i$ to $C_j$ must prefer $C_i$ to $C_k.$ Hence $S_{i,j} \subset S_{i,k}$ which, taking complements, implies that $S_{k,i} \subset S_{j,i}$
  Likewise, any voter who prefers $C_k$ to $C_j$ must prefer $C_k$ to $C_i$. Thus $S_{k,j} \subset S_{k,i}$ and  $S_{i,k} \subset S_{j,k}$.

\begin{proposition}\label{k_away}
1. Suppose $C_i$ is a non-winning candidate for some $i \le k$ and $C_{j}$ is a winning candidate with $j \ge i+k$. Then   $C_i$ cannot win in a head-to-head contest with any of the candidates $C_{i+1}, C_{i+2}, \ldots, C_{i+k}$. Additionally, no candidate to the left of $C_i$ can win in a head-to-head contest with any of the candidates  $C_{i+1}, C_{i+2}, \ldots, C_{i+k}$.

2. Suppose $C_i$ is  a non-winning candidate for some $i \ge m-k+1$ and $C_{j}$ is a winning candidate such that $j \le i-k$. Then $C_i$ cannot win in a head-to-head contest with any of the candidates  $C_{i-k}, C_{i-k+1}, \ldots, C_{i-1}$.
Additionally, no candidate to the right of $C_i$ can win in a head-to-head contest with any of the candidates  $C_{i-k}, C_{i-k+1}, \ldots, C_{i-1}$.
\end{proposition}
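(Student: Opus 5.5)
The plan is to compare, for each $s$ with $1\le s\le k$, the set $S_{i,i+s}$ of voters preferring $C_i$ to $C_{i+s}$ with the Bloc support sets $S_i$ and $S_j$, and to obtain the chain
$$\vert S_{i,i+s}\vert \le \vert S_i\vert < \vert S_j\vert \le \vert S_{j,i}\vert \le \vert S_{i+s,i}\vert .$$
This gives $\vert S_{i,i+s}\vert < \vert S_{i+s,i}\vert$, so $C_i$ loses to $C_{i+s}$. The middle inequality is just the hypothesis that $C_j$ wins and $C_i$ does not. The only structural fact about single-peaked preferences I will use is that each voter's top $k$ candidates form an interval of $k$ adjacent candidates containing the voter's peak.

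First I show $S_j\subseteq S_{j,i}$. Take a voter in $S_j$. Their top-$k$ interval $[a,a+k-1]$ contains $j$, so $a\ge j-k+1\ge i+1$. Hence the voter ranks $C_j$ in the top $k$ but not $C_i$, and therefore prefers $C_j$ to $C_i$. Next, since $i< i+s\le i+k\le j$, the Observation gives $S_{j,i}\subseteq S_{i+s,i}$. Together these yield the two right-hand inequalities.

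The step I expect to need the most care is $S_{i,i+s}\subseteq S_i$, and this is where $i\le k$ is used. Suppose a voter prefers $C_i$ to $C_{i+s}$ but does not rank $C_i$ in the top $k$. The top-$k$ interval cannot lie entirely to the left of $C_i$, because only $i-1<k$ candidates are there. So it lies to the right, starting at some $a\ge i+1$, and the peak $p$ satisfies $p\ge a>i$. If $p\ge i+s$, then $C_{i+s}$ lies between $C_i$ and the peak, so the voter prefers $C_{i+s}$ to $C_i$, a contradiction. Otherwise $a\le p<i+s$ and $a+k-1\ge i+k\ge i+s$, so the interval contains $C_{i+s}$ but not $C_i$, again a contradiction. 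This completes the chain.

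For a candidate $C_h$ with $h<i$, the Observation gives $S_{h,i+s}\subseteq S_{i,i+s}$. Hence $C_h$ does at most as well against $C_{i+s}$ as $C_i$ does, and so it also loses. Part 2 follows by the left–right mirror argument, replacing $C_\ell$ by $C_{m+1-\ell}$.
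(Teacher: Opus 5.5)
Your proof is correct and follows essentially the paper's argument: the same chain $\vert S_{i,i+s}\vert \le \vert S_i\vert < \vert S_j\vert \le \vert S_{j,i}\vert \le \vert S_{i+s,i}\vert$, with candidates to the left of $C_i$ handled by the Observation. The only difference is that you run the chain for every $s\le k$ directly, using the fact that each voter's top-$k$ set is an interval. The paper instead proves the case $s=k$, by splitting $S_{i,i+k}$ according to voters' first choices, and then gets the intermediate candidates from $S_{i,\ell}\subset S_{i,i+k}$.
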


\begin{proof}
We prove  statement (1). Consider the set $S_{i, i+k}$. This  can be partitioned into two subsets $T_1 \cup T_2$ where $T_1$ consists of all voters who  first-rank $C_{i}$ or any candidate to their left, and $T_2$ consists of all voters who  first-rank candidates between $C_{i}$ and $C_{i+k}$ and who prefer $C_i$ to $C_{i+k}$. Since  $i \le k$, any voter in $T_1$  must include $C_{i}$ among their top $k$. Also, since $C_{i+k}$ is $k$ away from $C_i$,  any voter in $T_2$ must also include $C_{i}$ among their top $k$. Thus,  $S_{i, i+k} \subset S_{i},$ which implies, 
\begin{equation}\label{eq_1}
\vert S_{i, i+k} \vert \le \vert S_{i}  \vert.\end{equation}
On the other hand, since $C_j$ is at least $k$ away from $C_i$, any voter who includes $C_j$ among their top $k$ must prefer $C_j$ to $C_i$, and hence lies in $S_{j, i}$. Additionally, by the Observation, since $C_{i+k}$ lies between $C_i$ and $C_j$,  any voter who prefers $C_j$ to $C_i$  must prefer $C_{i+k}$ to $C_i$. Hence $S_{j} \subset S_{j, i} \subset S_{i+k, i},$ which implies 
\begin{equation}\label{eq_2}
\vert S_j \vert \le \vert S_{j,i} \vert \le \vert S_{i+k,i} \vert.\end{equation}
Finally,   since $C_i$ is a non-winning candidate and $C_j$ is a winning candidate, we must have $\vert S_i\vert < \vert S_j \vert.$
Putting (\ref{eq_1}) and (\ref{eq_2})  together yields 

$$\vert S_{i, i+1} \vert  \le \vert S_i \vert < \vert S_j \vert \le \vert S_{i+k,i} \vert$$
and hence $\vert S_{i, i+k} \vert< N/2.$
So $C_i$ cannot be majority-approved to $C_{i+k}$.

Next, let $C_\ell$ be a candidate between $C_i$ and $C_{i+k}$. By the Observation,  $S_{i, \ell} \subset S_{i, i+k}$, and hence $\vert S_{i, \ell} \vert < N/2$. Thus, $C_i$ cannot be majority-preferred  to any of the candidates $C_{i+1}, C_{i+2}, \ldots, C_{i+k}.$

 Lastly, if $C_{i^\prime}$ is to the left of $C_i$, then again by the Observation,   $S_{i^\prime, i+k} \subset S_{i, i+k}$, which implies  $\vert S_{i^\prime, i+k} \vert< N/2$, and  if $C_\ell$ is between $C_i$ and $C_{i+k}$, then $S_{i^\prime, \ell} \subset S_{i, \ell}$ which implies $\vert S_{i^\prime, \ell} \vert< N/2$ . Thus $C_{i^\prime}$ also cannot be majority-preferred to any of the  candidates $C_{i+1}, C_{i+2}, \ldots, C_{i+k}$.
 
\end{proof}

From Proposition \ref{k_away}, we immediately get the following result, that restricts which of the extreme left and right candidates can beat more centrally-placed winning candidates in a head-to-head contest.  

\begin{proposition}\label{each_edge}
Suppose the winning set is $C_{i_1}, C_{i_2}, \ldots, C_{i_k}$. If $i_1\le k+1$ then no candidate on the left of $C_{i_1}$ can win in a head-to-head contest with any winning candidate. If $i_k\ge m-k$  then no candidate on the right of $C_{i_k}$ can win in a head-to-head contest with any winning candidate. 
\end{proposition}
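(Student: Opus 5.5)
The plan is to deduce this directly from Proposition \ref{k_away}, with the Observation covering winners that lie farther right. We prove the left-hand statement; the right-hand one follows by the mirror-image argument using part 2 of Proposition \ref{k_away}.

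First fix a candidate $C_i$ with $i < i_1$. It is non-winning, since $C_{i_1}$ is the leftmost winner. The hypothesis $i_1 \le k+1$ gives $i \le i_1 - 1 \le k$, which is exactly the position condition required in part 1 of Proposition \ref{k_away}.

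Next we need a winning candidate $C_j$ with $j \ge i+k$. The $k$ winning indices $i_1 < i_2 < \cdots < i_k$ are distinct integers, all strictly greater than $i$. Hence $i_k \ge i+k$, and we take $j = i_k$. Part 1 of Proposition \ref{k_away} then says that $C_i$ cannot beat any of $C_{i+1}, \ldots, C_{i+k}$ head-to-head. In particular it cannot beat any winner in that range.

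It remains to handle winners $C_\ell$ with $\ell > i+k$, which Proposition \ref{k_away} does not cover directly. Here the Observation applies to $i < i+k < \ell$. Any voter who prefers $C_i$ to $C_\ell$ must also prefer $C_i$ to $C_{i+k}$, since otherwise $C_{i+k}$ would be skipped. So $S_{i,\ell} \subset S_{i,i+k}$, and therefore $\vert S_{i,\ell}\vert \le \vert S_{i,i+k}\vert < N/2$. Thus $C_i$ beats no winning candidate at all.

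Since $C_i$ was an arbitrary candidate to the left of $C_{i_1}$, this proves the claim; if no such candidate exists, the statement is vacuous. The only step needing care is this extension to winners beyond $C_{i+k}$, which the subset argument above settles.
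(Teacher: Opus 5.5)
\textbf{The last step fails, and it cannot be repaired.} Your final step uses the Observation in the wrong direction. For $i<i+k<\ell$ the Observation gives $S_{i,i+k}\subset S_{i,\ell}$: a voter who prefers $C_i$ to the \emph{nearer} candidate $C_{i+k}$ also prefers it to the farther $C_\ell$. The inclusion you need, $S_{i,\ell}\subset S_{i,i+k}$, is false. Take a voter whose peak is $C_{i+k}$ and who ranks $C_{i+k-1},\dots,C_i$ next. That voter prefers $C_i$ to every $C_\ell$ with $\ell>i+k$, yet prefers $C_{i+k}$ to $C_i$, and no candidate is skipped. So you only get $|S_{i,\ell}|\ge|S_{i,i+k}|$, which says nothing, because a winner farther from $C_i$ is easier for $C_i$ to beat, not harder.

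Everything before that step is sound. Proposition~\ref{k_away} with $j=i_k$ (your count $i_k\ge i+k$ is right) shows $C_i$ beats no winner among $C_{i+1},\dots,C_{i+k}$. Its last clause, applied at $i=i_1-1$, shows no candidate left of $C_{i_1}$ beats any of $C_{i_1},\dots,C_{i_1+k-1}$. You were right that reaching winners beyond this range is the delicate point. The paper's own proof has the same hole: it cites Proposition~\ref{k_away} and concludes ``any of the winning candidates.''

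The hole cannot be filled because the statement is false as written. The paper's Table~\ref{BCE_winning} already refutes the right-hand half. There $m=7$, $k=3$ and the winners are $B,C,E$, so $i_k=5\ge m-k$. Yet $F$ (and $G$) beat $B$ by $5$ to $4$, as the accompanying example notes. Reflecting that profile ($A\leftrightarrow G$, $B\leftrightarrow F$, $C\leftrightarrow E$) gives winners $C,E,F$ with $i_1=3\le k+1$, and then $B$ (and $A$) beat the winner $F$ by $5$ to $4$.

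What your argument does establish is the weaker statement just described. In particular, no candidate left of $C_{i_1}$ beats $C_{i_1}$, so none is majority-preferred to \emph{all} the winners. That weaker form is all the paper's later local-stability discussion actually uses.
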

\begin{proof}
Since $i_1-1 \le k$ and candidate $C_{i_1-1}$ is non-winning while candidate $C_{i_k}$ is winning, Proposition \ref{k_away} implies neither $C_{i_1-1}$ nor any candidate to their left can win in a head-to-head contest against any of the winning candidates. Similarly, since $i_k+1\ge m-k+1$ and $C_{i_k+1}$ is non-winning,  while $C_{i_1}$ is winning,  Proposition \ref{k_away}  implies  neither $C_{i_k+1}$ nor any candidate to their right can win in a head-to-head contest against any of the winning candidates.
\end{proof}

If the winning set is adjacent and $k$ is large enough,  Proposition \ref{each_edge} implies that it is also Gehrlein-stable.   

\begin{corollary}\label{edges} Suppose $k \ge \lceil \frac{m}{3} \rceil$. If the winning set, $C_i, C_{i+1}, \ldots, C_{i+k-1}$, is adjacent and satisfies $m-2k+1 \le i \le k+1$ then the winning set is Gehrlein-stable. 
\end{corollary}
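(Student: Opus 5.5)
The plan is to apply Proposition \ref{each_edge} directly to the adjacent winning set $W=\{C_i, C_{i+1}, \ldots, C_{i+k-1}\}$, with $i_1 = i$ and $i_k = i+k-1$. Gehrlein-stability requires that every non-winning candidate lose its head-to-head contest with every winning candidate. Because $W$ is adjacent, each non-winning candidate lies either strictly to the left of $C_i$ or strictly to the right of $C_{i+k-1}$. So it suffices to check the two hypotheses of Proposition \ref{each_edge}.

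For the left side, we need $i_1 \le k+1$, which is exactly the assumption $i \le k+1$. Proposition \ref{each_edge} then says that no candidate to the left of $C_i$ beats any winner. If $i=1$ there are no such candidates, and the claim holds vacuously.

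For the right side, we need $i_k \ge m-k$, that is, $i+k-1 \ge m-k$, or $i \ge m-2k+1$. This is the other assumed inequality. Proposition \ref{each_edge} then says that no candidate to the right of $C_{i+k-1}$ beats any winner. Combining the two sides, every non-winning candidate loses to every winning candidate, so $W$ is Gehrlein-stable.

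The hypothesis $k \ge \lceil m/3\rceil$ does no work in this argument. It only guarantees that the range $m-2k+1 \le i \le k+1$ is nonempty, since $m - 2k+1 \le k+1$ exactly when $3k \ge m$. I would note this in a remark rather than use it.

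The only place I expect care to be needed is confirming that Proposition \ref{each_edge} really applies when there are no candidates on one side (for example $i_k = m$), and that its inherited hypotheses from Proposition \ref{k_away} hold. In particular, the non-winning neighbour $C_{i-1}$ must satisfy $i-1\le k$, and there must be a winner at distance at least $k$ from it. Here $C_{i+k-1}$ is a winner with $(i+k-1)-(i-1) = k$, so the distance condition holds. The right side is symmetric: $C_{i+k}$ is non-winning with $i+k \ge m-k+1$, and $C_i$ is a winner at distance $k$ from it. Given Proposition \ref{each_edge} as stated, these checks are immediate, and the proof is essentially a translation of indices.
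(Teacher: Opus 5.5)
Your proof is correct and follows the paper's route: the paper presents this corollary as an immediate consequence of Proposition~\ref{each_edge}, with $i\le k+1$ handling the candidates on the left and $i+k-1\ge m-k$ handling those on the right, and it likewise remarks that $k\ge\lceil m/3\rceil$ only guarantees that the range for $i$ is nonempty. Your check of the hypotheses inherited from Proposition~\ref{k_away} is well placed, because for an adjacent set the window of $k$ candidates covered there, $C_i,\ldots,C_{i+k-1}$, is exactly the winning set.
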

(Note that  condition  $m-2k+1 \le i \le k+1$ can only  hold in  Corollary \ref{edges} if $m-2k+1 \le k+1$ or $k \ge \lceil \frac{m}{3} \rceil.$)

Together, Proposition \ref{adj} and Corollary \ref{edges} imply the following. 
\begin{proposition}\label{over_half}
If  $k \ge \lceil m/2 \rceil$ then the winning set is Gehrlein-stable. 
\end{proposition}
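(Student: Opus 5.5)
By Proposition \ref{adj}, when $k \ge \lceil m/2 \rceil$ the winning set is adjacent, so we may write it as $C_i, C_{i+1}, \ldots, C_{i+k-1}$ for some $i$ with $1 \le i \le m-k+1$. The plan is to apply Corollary \ref{edges}, which needs two things: the hypothesis $k \ge \lceil m/3 \rceil$, and the window condition $m-2k+1 \le i \le k+1$.

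First, since $\lceil m/2\rceil \ge \lceil m/3 \rceil$, the hypothesis on $k$ holds immediately.

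Second, we check the window condition on $i$. For the upper bound, note $i \le m-k+1$. Because $k \ge m/2$, we have $m-k \le k$, and therefore $i \le k+1$. For the lower bound, we have $i \ge 1$. Since $2k \ge m$, it follows that $m-2k+1 \le 1 \le i$. Corollary \ref{edges} then gives Gehrlein-stability.

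The only real subtlety is that Corollary \ref{edges} is stated without proof. So I would also justify it directly from Proposition \ref{each_edge}, as a safeguard in the edge cases $i=1$ or $i=m-k+1$, where one side has no losing candidates.

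For an adjacent set we have $i_1=i$ and $i_k=i+k-1$.
\begin{itemize}
\item The condition $i_1 \le k+1$ is exactly $i \le k+1$.
\item The condition $i_k \ge m-k$ is exactly $i \ge m-2k+1$.
\end{itemize}
So Proposition \ref{each_edge} says that no losing candidate on the left and no losing candidate on the right beats any winner. Because the set is adjacent, every losing candidate lies on one side or the other, which is precisely the definition of Gehrlein-stability.

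When one side is empty, the corresponding clause of Proposition \ref{each_edge} is vacuous, so no extra argument is needed there. I do not expect a genuine obstacle beyond this bookkeeping of the index inequalities.
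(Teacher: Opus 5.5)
Your proof is correct and follows the paper's route: adjacency from Proposition \ref{adj}, then the index bounds $i \le k+1$ and $i+k-1 \ge m-k$, which let you apply Corollary \ref{edges}. Your uniform derivation from $2k \ge m$, namely $i \le m-k+1 \le k+1$ and $m-2k+1 \le 1 \le i$, is tidier than the paper's even-case argument. Your check of Corollary \ref{edges} via Proposition \ref{each_edge} also fills in a step the paper states without proof.
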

\begin{proof}
By Proposition \ref{adj}, the winning set must be adjacent. Suppose it is given by $C_i, C_{i+1}, \ldots, C_{i+k-1}$. Suppose, in addition, that $m$ is even. Since $k \ge m/2$, the furthest right position that $C_i$ can occupy $m/2+1 =k+1$; hence $i \le k+1$. Similarly, the farthest left position that $C_{i+k-1}$ can occupy is $m/2 \ge m-k$; hence $i+k-1 \ge m-k$. By Corollary \ref{edges}, the winning set is Gehrlein-stable. A similar argument applies when $m$ is odd.
\end{proof}

If $k< \lceil m/2 \rceil,$ then a winning set may fail to be adjacent. Even if it is adjacent,  the winning set may not be Gehrlein-stable. For instance, if $m=7$ , Corollary \ref{edges} implies that if $k=3$, the winning sets $C_2C_3C_4$, $C_3C_4C_5$  and $C_4C_5C_6$ are Gehrein-stable, however sets $C_1C_2C_3$ and $C_5C_6C_7$ may not be. 

Of course, if the winning set is not adjacent, it is possible that non-winning candidates lying between winning candidates may be majority-preferred to one or more winning candidates. Thus, no other winning sets are guaranteed to be Gehrlein-stable.

\subsection{Local Stability}

Since Gehrelin-stability implies local stability, Corollary \ref{edges} and Proposition \ref{over_half} also identify a number of winning sets that are locally stable.  But there are also instances when a winning set is not Gehrlein-stable and yet satisfies locally stability. This was the case in Proposition \ref{BD_set}, in which  $m=5$, $k=2$ and the winning set was $BD$ locally stable.

 Proposition \ref{each_edge} shows that candidates on the extreme left and right of a winning set are not majority preferred to winning candidates. However, if $k$ is relatively small in comparison to $m$, there may be non-winning candidates to the left of the left-most winner or to the right of the right-most winner that are majority-preferred to some or all of the winners. If such a candidate is majority-preferred to all the winners, the winning set will not be locally stable. This is the case, for instance, if the winning set lies entirely to the left or right of the Condorcet winner; this set will be neither a Condorcet set nor locally stable.
 
Finally, we consider the situation when there are non-winning candidates lying between winning candidates.   In this case, the non-winning candidate may be majority-preferred to each of the winners. However, as
 in Proposition \ref{BD_set},  if the gap between winners is small enough, the winning set will still be locally stable.

\begin{proposition}\label{gaps}
Suppose there is a gap between winning candidates of size $\ell \le k$ (so there are $\ell$ non-winning candidates between two winners).  If $C_{\ell}$ is a non-winning candidate in the gap, then the number of voters who prefer $C_{\ell}$ to all winning candidates is less than $N/2$. 
\end{proposition}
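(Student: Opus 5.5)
The plan is to generalize the argument of Proposition \ref{BD_set}. Let the two winners bounding the gap be $C_a$ and $C_b$ with $a<b$, so the gap consists of $C_{a+1},\ldots,C_{b-1}$ and $\ell=b-a-1\le k$. Let $C_c$ be a non-winning candidate in the gap.

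\textbf{Step 1: reduce to voters peaked in the gap.} Any voter who prefers $C_c$ to every winner prefers it in particular to $C_a$ and to $C_b$. By single-peakedness, such a voter's first choice must lie strictly between $C_a$ and $C_b$. So it suffices to show that the set $G$ of voters whose peak lies in the gap satisfies $\vert G\vert<N/2$. I would partition all voters as $L\cup G\cup R$, where $L$ (respectively $R$) consists of the voters peaked at or left of $C_a$ (respectively at or right of $C_b$). The goal is then $\vert G\vert<\vert L\vert+\vert R\vert$.

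\textbf{Step 2: a structural lemma on top-$k$ sets.} For a single-peaked voter, the top $k$ candidates form an interval $I$ of $k$ consecutive candidates containing the peak. I claim that for a voter in $G$, the interval $I$ contains $C_c$, or $C_a$, or $C_b$. Suppose $I$ misses $C_c$ with the peak to the left of $C_c$. Then $I$ ends at or before $C_{c-1}$, so its left end is at most $c-k\le c-\ell\le a$, and hence $I$ contains $C_a$. The case with the peak to the right of $C_c$ gives $C_b$ symmetrically. This is exactly where the hypothesis $\ell\le k$ enters; in the case $m=5$, $BD$, it is the fact that every gap voter approves $C$.

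\textbf{Step 3: compare vote totals.} Since $C_a$ and $C_b$ win and $C_c$ does not, we have $\vert S_a\vert>\vert S_c\vert$ and $\vert S_b\vert>\vert S_c\vert$. I would add these and count each voter's contribution $f(v)=[C_a\in I]+[C_b\in I]-2[C_c\in I]$, so that $\sum_v f(v)>0$. By Step 2 and interval geometry:
\begin{itemize}
\item a voter in $G$ whose interval misses $C_c$ contains exactly one of $C_a,C_b$;
\item a voter in $G$ whose interval contains $C_c$ has $f(v)\le 0$.
\end{itemize}
So I then need to convert $\sum_v f(v)>0$ into $\vert G\vert<\vert L\vert+\vert R\vert$.

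A cleaner alternative is to mimic the subtraction trick of Proposition \ref{BD_set}. Choose the winner, say $C_a$, on the side where the gap voters approving $C_c$ are most numerous. Then subtract the common voters $S_a\cap S_c$ from $\vert S_a\vert>\vert S_c\vert$, obtaining $\vert S_a\setminus S_c\vert>\vert S_c\setminus S_a\vert$. The aim is to show that $S_a\setminus S_c$ lies essentially in $L$, while $S_c\setminus S_a$ contains every gap voter that does not approve $C_a$.

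\textbf{Main obstacle.} The hard part is this final bookkeeping. Gap voters whose interval misses $C_c$ still approve a winner, so they appear on the winning side of the inequality rather than on $C_c$'s side. In the five-candidate case this issue did not arise, because the gap had a single candidate.

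My first attempt would be to use both inequalities simultaneously, pairing the gap voters that approve $C_a$ but not $C_c$ against $\vert S_b\vert>\vert S_c\vert$, and conversely. The aim is to show that each gap voter is counted on $C_c$'s side of at least one of the two inequalities, while each voter in $L\cup R$ is counted on the winners' side at most once in total. Adding the two inequalities would then give $\vert G\vert<\vert L\vert+\vert R\vert$, and hence $\vert G\vert<N/2$, as required.
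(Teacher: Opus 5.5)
\textbf{There is a genuine gap: Step 1 reduces the problem to a false inequality.} It is true that every voter who prefers $C_c$ to all winners is peaked in the gap. But $\vert G\vert<N/2$ can fail as soon as $\ell\ge 2$, so the bookkeeping you flag as the main obstacle cannot be completed.

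Take $m=6$, $k=2$ and $23$ voters:
\begin{itemize}
\item $10$ voters with $C_3\succ C_2\succ C_1\succ C_4\succ C_5\succ C_6$;
\item $10$ voters with $C_4\succ C_5\succ C_6\succ C_3\succ C_2\succ C_1$;
\item $1$ voter with $C_2\succ C_1\succ C_3\succ C_4\succ C_5\succ C_6$;
\item $2$ voters with $C_5\succ C_6\succ C_4\succ C_3\succ C_2\succ C_1$.
\end{itemize}
The Bloc totals are $1,11,10,10,12,2$. So the winners are $C_2$ and $C_5$, with gap $\{C_3,C_4\}$ of size $\ell=2=k$. Yet $\vert G\vert=20>N/2$. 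The proposition itself still holds here: only the $10$ voters peaked at $C_3$ prefer $C_3$ to both winners, and symmetrically for $C_4$.

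This is exactly the phenomenon you noticed. When $C_c=C_3$, the $C_4\succ C_5$ voters lie in $G$ but do not approve $C_c$. They therefore never appear on $C_c$'s side of $\vert S_a\vert>\vert S_c\vert$ or of $\vert S_b\vert>\vert S_c\vert$. So no pairing of these two inequalities can give $\vert G\vert<\vert L\vert+\vert R\vert$. Your Step 2 lemma is correct, but it points the wrong way for this purpose.

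\textbf{The missing idea is to bound the target set $T$ by $S_c$ rather than by $G$.}
\begin{itemize}
\item Suppose a voter ranks $C_c$ above $C_a$ and $C_b$. The candidates that voter ranks at or above $C_c$ form an interval containing the peak and $C_c$ but neither $C_a$ nor $C_b$, so the interval lies inside the gap.
\item Hence $C_c$ is among that voter's top $\ell\le k$ candidates, and $T\subset S_c$. This, not your Step 2, is where $\ell\le k$ is really used.
\item Next, compare $C_c$ with the leftmost and rightmost winners $C_{i_1}$ and $C_{i_k}$, rather than with $C_a$ and $C_b$. The bounding winners can share supporters, since $b-a=\ell+1$ may be less than $k$.
\item The extreme winners span all $k$ winners plus at least one gap candidate, so $i_k-i_1\ge k$. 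No top-$k$ interval contains both, so $S_{i_1}\cap S_{i_k}=\emptyset$.
\item Since $C_c$ loses to both, $2\vert S_c\vert<\vert S_{i_1}\vert+\vert S_{i_k}\vert\le N$. Therefore $\vert T\vert\le\vert S_c\vert<N/2$.
\end{itemize}
This is what the paper's partition argument with $T_1,T_2,T_3,L,R$ amounts to.
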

\begin{proof}
Let $T$ be the set of voters who prefer $C_{\ell}$ to every winning candidate. Since $\ell \le k$, $T \subset S_{\ell}.$  Let $C_{i_1}$ and $C_{i_k}$ be the left-most and right-most winning candidates. Since the candidates are more than $k$ apart,  $S_{i_1}$ and $S_{i_k}$ are disjoint. Hence we can partition  $S_{\ell}$ into three sets $S_{\ell}=T_1 \cup T_2 \cup T_3,$ where $T_1=S_{\ell} \cup S_{i_1}$ is the set of voters  who include  $C_{\ell}$ and $C_{i_1}$ but not $C_{i_k}$  among their top $k$ candidates, $T_2=S_{\ell} \cup S_{i_k}$ is the set of voters  who include $C_{\ell}$ and $C_{i_k}$ but not $C_{i_1}$ among their top $k$ candidates and $T_3=S_{\ell} \cup (S_{i_1} \cup S_{i_k})^c$ is the set of voters  who include  $C_{\ell}$ but neither  $C_{i_1}$ nor $C_{i_k}$  among their top $k$ candidates.  Then $\vert S_{\ell} \vert = \vert T_1 \vert+ \vert T_2 \vert+\vert T_3 \vert.$

 Next, let $S_{i_1} = L \cup T_1 $ where $L$ is the set of voters who  include $C_{i_1}$ but not $C_{\ell}$ among their top $k$, and let $S_{j_k} = R \cup T_2 $ where $R$ is the set of voters who  include $C_{i_k}$ but not $C_{\ell}$ among their top $k$.
 Then
 $$\vert S_{i_1} \vert = \vert L \vert+ \vert T_1 \vert \quad \text{and} \quad \vert S_{i_k} \vert = \vert R \vert+ \vert T_2 \vert.$$

 Since $C_{\ell}$ is a non-winning candidate and $C_{i_1}$ and $C_{i_k}$ are winning candidates,

 $$\vert T_1 \vert+ \vert T_2 \vert+\vert T_3 \vert = \vert S_{\ell} \vert < \vert S_{i_1} \vert = \vert L \vert+ \vert T_1 \vert,$$
which implies
 \begin{equation}\label{T_23}
 \vert T_2 \vert+\vert T_3 \vert < \vert L \vert,\end{equation}
and 
 $$ \vert T_1 \vert+ \vert T_2 \vert+\vert T_3 \vert \le \vert S_{\ell} \vert < \vert S_{i_k} \vert = \vert R \vert+ \vert T_2 \vert,$$
which implies
 \begin{equation}\label{T_13}
 \vert T_1 \vert+\vert T_3 \vert < \vert R \vert.\end{equation}

 Adding (\ref{T_23}) and (\ref{T_13}) together, we get

 $$\vert T_1 \vert+\vert T_2 \vert+2 \vert T_3 \vert< \vert L \vert+\vert R \vert,$$
which implies $\vert T \vert \le \vert S_{\ell} \vert < \vert L \vert+\vert R \vert.$
But $L$ and $R$ are disjoint. Moreover, if a voter is in $L$ or $R$, they are not in $T$; hence $L \cup R \subset T^c.$
So
$$\vert T \vert< \vert L \vert+\vert R \vert= \vert L \cup R \vert \le \vert T^c \vert.$$
Thus, $\vert T \vert < N/2$, as required.  
\end{proof}

Proposition \ref{gaps} shows that if the winning set is non-adjacent, then violations of local stability can  come   from non-winning candidates lying between winners only if the winning candidates are sufficiently far apart. Since a Condorcet winner will be majority-preferred to all other candidates, this implies that if the winning set does not include the Condorcet winner, the set will not be Condorcet, however it may be locally stable since different sets of voters will prefer the Condorcet winner to winning candidates on the left and on the right.

In   Sections \ref{sub_6} and \ref{sub_7}, we illustrate how the results in this Section characterize the properties of winning sets for $m=6$ and $m=7$ candidates. 

\subsection{Six Candidates}\label{sub_6}

Suppose there are 6 candidates, $A, B, C, D, E, F$ arranged left-to-right. By Propositions \ref{adj}  and \ref{over_half}, if $k \ge 3$, the winning sets must be adjacent and Gehrlein-stable. If $k=3$, these are $ABC$, $BCD$, $CDE$, and $DEF$.  If $k=4$, these are $ABCD$, $BCDE$,  and $CDEF.$ If $k=5$, these are $ABCDE$, and $BCDEF.$

If $k = 2$  the  winning sets  may be adjacent or non-adjacent. By Corollary \ref{edges},  the set $CD$ is  Gerlein-stable. Each  of remaining adjacent sets,  $AB, BC, CD, DE$ and $EF$,  winning set may not contain the Condorcet winner and hence may not be Condorcet or Gehrlein-stable. We consider them in turn.

The winning sets $AB$ and $EF$, which lie  on one extreme or the other of the spectrum fare the worst in head-to-head matches. In fact,  it is possible for every non-winning candidate to be majority-preferred to every winning candidate, as illustrated in Example \ref{ex_6_AB}. 

\begin{example}\label{ex_6_AB}
The winning set for the profile in Table \ref{AB_set} is $AB$, but every non-winning candidate is majority-preferred to every  winning candidate. Note that the winning set is neither Condorcet, since it does not contain the Condorcet winner, $D$, or locally stable, since the 5 right-most  voters prefer each of  the non-winning candidates to either of  the winners.

\begin{table}[h]\begin{center}
$$\begin{array}{ccc}
4 & 2 & 3 \\
\hline
A & D & F \\
B & C & E\\
C&  E & D \\
D & F & C \\
E & B & B\\
F & A & A
\end{array}$$
\end{center}
\caption{An example where the winning set is $AB$ and each non-winning  candidate is majority-preferred to each winning candidate.}
\label{AB_set}
\end{table}
\end{example}

If the winning set is $BC$ or $DE$, the winners that are closer to the center will not lose all their head-to-head contests. For example, if the winning set is $BC$ then  Proposition \ref{each_edge} implies $A$ cannot be majority-preferred to either winning candidate, and that $E$ and $F$ 
cannot be majority-preferred to $C$. However, it is possible for $E$ and $F$ to be majority-preferred to $B$, and for $D$ to be majority-preferred to both winning candidates as in the next example.

\begin{example} The winning set for the profile in Table \ref{BC_set} is $BC$,  However, a majority  of voters prefer $E$ or $F$ to $B$, and the same majority prefer $D$ (the Condorcet winner) to both winning candidates.  This winning set is neither Condorcet nor locally stable.
\begin{table}
\begin{center}
$$\begin{array}{ccc}
4 & 2 & 3 \\
\hline
C & D & F \\
B & C & E\\
A&  E & D \\
D & F & C \\
E & B & B\\
F & A & A
\end{array}$$
\end{center}
\caption{An example where the winning set is $BC$ is not Condorcet or locally stable.  }
\label{BC_set}
\end{table}
\end{example}

Lastly, we consider the non-adjacent winning sets: $BD, BE$ and $CE.$ By Proposition \ref{each_edge}, no candidate to the left of the left-most winner or to the right of the right-most winner is majority-preferred to all the winners. In addition, by Proposition \ref{gaps}, no candidate lying between winners can be preferred by the same majority to all the winners. Thus, each of these winning sets is locally stable. However, none of these sets is guaranteed to be Gehrlein-stable.  We leave it to the reader to construct profiles that satisfy the following.
\begin{itemize}
\item The winning set is $BD$,   non-winning candidates $C, E$ and $F$ are majority-preferred to $B$, and $C$ is  majority-preferred to $D$.
\item The winning set is $BE$, non-winning candidate $F$ is majority-preferred to $B$, and  $C$ and $D$ are majority-preferred to $B$ and $E.$
\end{itemize}

In summary, when $m=6$, we have the following.

\begin{proposition} \label{summary_6} Suppose there are $6$ candidates and $k = 2, \ldots, 5$. Then all winning sets except (if $k=2$), $AB, BC, DE$ and $EF$ are locally stable. If $k \ge 3$ then the winning sets are also adjacent and Gehrlein-stable. 
\end{proposition}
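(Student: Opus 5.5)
The plan is to split by $k$ and assemble the result from the general propositions of this section. For $k\ge 3=\lceil 6/2\rceil$, Proposition \ref{adj} gives that every winning set is adjacent, and Proposition \ref{over_half} gives that it is Gehrlein-stable. Since Gehrlein-stability implies local stability with majority quota $q=\lfloor N/2\rfloor+1$, this settles $k=3,4,5$ at once.

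For $k=2$, I first list the possible winning sets. By the proposition on winning sets near the extremes (if $C_i$ wins with $i\le k$ then $C_{i+1},\dots,C_k$ win), a set containing $A$ must contain $B$, and symmetrically a set containing $F$ must contain $E$. So the candidates are $AB$, $EF$, and the two-element subsets of $\{B,C,D,E\}$: namely $BC, BD, BE, CD, CE, DE$. Corollary \ref{edges} with $m=6$, $k=2$ (condition $3\le i\le 3$) shows $CD$ is Gehrlein-stable, hence locally stable.

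For the non-adjacent sets $BD$, $BE$, $CE$, I need to rule out any losing candidate preferred by a single majority to both winners.
\begin{itemize}
\item \emph{Candidates outside the span.} For a candidate left of the left-most winner $C_{i_1}$ we have $i_1\in\{2,3\}\le k+1=3$. For one right of the right-most winner $C_{i_2}$ we have $i_2\in\{4,5\}\ge m-k=4$. So Proposition \ref{each_edge} says such a candidate cannot beat any winner head-to-head, and in particular cannot be majority-preferred to all winners.
\item \emph{Candidates in the gap.} The gap has size $\ell\le 2=k$, so Proposition \ref{gaps} shows fewer than $N/2$ voters prefer a gap candidate to every winner.
\end{itemize}
Hence these three sets are locally stable. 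Finally, the examples for $AB$ (Example \ref{ex_6_AB}) and $BC$ (the example following it), together with their mirror images for $EF$ and $DE$, show these four sets can fail local stability; that justifies the listed exceptions.

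The main obstacle is applying Proposition \ref{gaps} to $BE$, where the gap contains two candidates, $C$ and $D$. Its proof uses that $S_{i_1}$ and $S_{i_2}$ are disjoint and that $T\subset S_\ell$. For $B$ and $E$ with $k=2$, the winners are three apart, so no voter has both in their top two. A voter preferring $C$ (or $D$) to both $B$ and $E$ must also rank that candidate in their top two, since they peak strictly between $B$ and $E$. So the hypotheses hold, and I would check these two facts explicitly for each of $BD$, $BE$, $CE$ rather than rely solely on the stated proposition.
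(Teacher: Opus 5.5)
Your proposal is correct and follows the paper's own argument essentially step for step. It uses Propositions \ref{adj} and \ref{over_half} for $k\ge 3$; for $k=2$ it lists the possible sets, applies Corollary \ref{edges} to $CD$, applies Propositions \ref{each_edge} and \ref{gaps} to $BD$, $BE$, $CE$, and uses the two examples and their mirror images to show the exceptions.

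One small caveat concerns your claim that a candidate outside the span ``cannot beat any winner.'' That is stronger than what Proposition \ref{k_away} actually proves, and it is false for $BE$. For example, 10 voters $A\succ B\succ C\succ D\succ E\succ F$, 2 voters $B\succ C\succ A\succ D\succ E\succ F$, 5 voters $D\succ E\succ F\succ C\succ B\succ A$ and 6 voters $E\succ F\succ D\succ C\succ B\succ A$ elect $BE$, yet $A$ beats $E$ 12 to 11. All you need is that such a candidate cannot beat the \emph{nearest} winner, which Proposition \ref{k_away} does give and which suffices for local stability.
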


\subsection{Seven Candidates}\label{sub_7}

Suppose there are seven candidates, $A, B, C, D, E, F$ and $G$, arranged left to right. If $k \ge 4$, then by Propositions \ref{adj} and \ref{over_half}, each winning coalition is adjacent and Gehrlein-stable. 
If $k=3$ and the  winning set is $BCD$, $CDE$, or $DEF$ then Proposition \ref{k_away} implies that no losing candidate can win against a winning candidate; hence these are Gehrlein-stable. 

If the winning set is $ABC$ then neither $E, F$, nor $G$ can win against $B$ or $C$; however, all other head-to-head contests between winning and non-winning candidates could be won by the non-winning candidate, as demonstrated by the following example. (Similar statements hold for winning set $EFG$).

\begin{example}\label{ex_6_BE} The Bloc winning set for the  profile in Table \ref{ABC_winning} is $ABC$.  All non-winning candidates are majority-approved to $A$; in addition, $D$ (the Condorcet winner) is majority-approved by the five right-most voters to all winners, demonstrating that this winning set is not locally stable either.  

\begin{table}[h]
$$\begin{array}{ccc}
4 & 2 & 3\\
\hline
A & D & E \\
B & C & F \\
C&  B & G \\
D & E & D \\
E & F & C\\
F & G & B\\
G & A & A
\end{array}$$
\caption{A profile with winning set  $ABC$ that is not locally stable.}
\label{ABC_winning}
\end{table}
\end{example}

Next, we consider the non-adjacent winning sets,  $BCE$ and $CEF$. If the winning set is $BCE$, then by Proposition \ref{k_away}, $A$ cannot be majority-approved against to of the winning candidates, and  $F$ and $G$ cannot be majority-approved to $C$ or $E.$ But all other head-to-head contests between winning and non-winning candidates can be won by the non-winner. Bty Proposition \ref{gaps}, however, the winning coalition is locally stable. (Analogous statements apply if the winning set is $CEF$.)

\begin{example} The Bloc winning set for the  profile in Table \ref{BCE_winning} is  $BCE$. Non-winning candidates  $F$ and $G$ are majority-preferred by the five right-most candidates to $B$, and $D$ is majority-preferred to both $B$ and $C$. ($D$ is not preferred to $E$, who is the Condorcet winner.)  

\begin{table}[h]
\begin{center}
$$\begin{array}{cccc}
3 & 1&2 & 3\\
\hline
A & D&E & E \\
B & C&D & F \\
C& B& C & G \\
D &E& F & D \\
E &F& G & C\\
F & G&B & B\\
G & A&A & A
\end{array}$$
\end{center}
\caption{A profile with winning set  $BCE$ that is not Gehrlein-stable but is Condorcet and locally stable.}
\label{BCE_winning}
\end{table}
\end{example}

If $k=2$, the winning coalitions include: the adjacent sets $AB, BC, CD, DE, EF$ and $FG$; and the non-adjacent sets $BD, BE, BF, CE, CF$ and $DF$. Since $k$ is small relative to $m$, even centrally-placed winning sets such as $CD$ case, may not be Gehrlein-stable or locally stable.
Again, we leave it to the reader to find a profile such that the winning set is $CD$, the Condorcet winner is $F$, and all three non-winning candidates are majority-preferred to $D$. 

We summarize these remarks in the following Proposition.

\begin{proposition}\label{summary_7}
If $m=7$ and $k=3, 4$ or $5$, then all  winning sets are Gehrlein-stable with the exception (if $k=3$) of $ABC$ and $EFG$, $BCE$ and $CEF$. The sets $ABC$ and $EFG$ are not locally stable; the set $BCE$ and $CEF$ are locally stable. If $k=2$, none of the winning sets is guaranteed to be  Gehrlein-stable, Condorcet, or locally-stable.
\end{proposition}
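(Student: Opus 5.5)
The proof assembles results already in the paper and adds a few explicit profiles. I would split by $k$ and, for $k=3$, by the possible winning sets.

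\emph{Case $k\ge 4$.} Here $k\ge\lceil 7/2\rceil=4$, so Proposition \ref{adj} makes every winning set adjacent, and Proposition \ref{over_half} makes it Gehrlein-stable. This covers $k=4$ and $k=5$.

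\emph{Case $k=3$: listing the winning sets.} By the proposition on winning sets, a winner $C_i$ with $i\le 3$ forces $C_{i+1},\dots,C_3$ to win. A winner $C_i$ with $i\ge 5$ forces $C_5,\dots,C_{i-1}$ to win. Running through the possibilities leaves $ABC$, $BCD$, $CDE$, $DEF$, $EFG$, $BCE$ and $CEF$. The set $BDE$ is excluded because $B$ forces $C$, and $BCF$ is excluded because $F$ forces $E$.

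\emph{Case $k=3$: the centrally placed adjacent sets.} For $BCD$, $CDE$ and $DEF$ the condition $m-2k+1=2\le i\le 4=k+1$ holds, so Corollary \ref{edges} gives Gehrlein-stability.

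\emph{Case $k=3$: $BCE$ and $CEF$.} These are not Gehrlein-stable, as the profile in Table \ref{BCE_winning} shows ($D$ beats $B$ and $C$), with the mirror profile handling $CEF$. They are nevertheless locally stable:
\begin{itemize}
\item Non-winners outside the span of the winners are not majority-preferred to any winner, by Proposition \ref{each_edge}. For $BCE$ we have $i_1=2\le k+1=4$ and $i_k=5\ge m-k=4$.
\item The single non-winner in the gap, $D$, lies in a gap of size $1\le k$. So by Proposition \ref{gaps}, fewer than $N/2$ voters prefer $D$ to all winners.
\end{itemize}

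\emph{Case $k=3$: $ABC$ and $EFG$.} These are neither Gehrlein-stable nor locally stable. The profile of Table \ref{ABC_winning} has Bloc winners $ABC$, and the five right-most voters prefer $D$ to every winner. Reflecting it gives $EFG$. Checking these tallies is routine.

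\emph{Case $k=2$.} Here I would exhibit, for each of the three properties, a profile whose winning set fails it. One profile can serve for all three. Reusing the Table \ref{AB_set} pattern with seven candidates should work: winners $AB$, with a right-leaning majority whose Condorcet winner is outside $AB$ and who prefer that Condorcet winner to both $A$ and $B$. That single profile then fails Gehrlein-stability, the Condorcet property and local stability at once.

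To read "none is guaranteed" for each individual winning set, I would add examples such as the suggested $CD$ profile with Condorcet winner $F$, and their analogues. The non-adjacent sets with a gap of size $\le 2$ are locally stable by Propositions \ref{gaps} and \ref{each_edge}, so for them I would only claim failure of Gehrlein-stability and the Condorcet property.

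\emph{Main obstacle.} The main difficulty is the $k=2$ constructions. They need explicit vote counts in which the intended pair wins Bloc voting while the remaining voters are arranged to make an outside candidate a majority favourite. I would try three voter blocks as in Table \ref{AB_set}, adjusting the sizes until both the Bloc tallies and the head-to-head tallies come out as required.
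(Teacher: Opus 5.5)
Your treatment of $k\ge 3$ follows the paper's route: Propositions~\ref{adj} and~\ref{over_half} for $k\ge 4$, the seven-set enumeration, Corollary~\ref{edges} for $BCD$, $CDE$, $DEF$, Tables~\ref{ABC_winning} and~\ref{BCE_winning} for the exceptions, and Proposition~\ref{gaps} for $D$. The faulty step is your first bullet for $BCE$. Proposition~\ref{each_edge} is derived from Proposition~\ref{k_away}, which only protects the $k$ candidates nearest the outside non-winner. So for a non-adjacent set, the claim that outside non-winners are ``not majority-preferred to any winner'' is false. The profile you cite refutes it: in Table~\ref{BCE_winning}, $F$ and $G$ each beat $B$ by $5$ to $4$. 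What is true, and what the paper uses, is Proposition~\ref{k_away} directly: $A$ cannot beat $B$ or $C$, and $F$, $G$ cannot beat $C$ or $E$. Local stability only needs each outside non-winner to lose to \emph{some} winner, so cite that instead; your conclusion survives.

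For $k=2$, your single $AB$ profile is sound and supplies what the paper leaves to the reader. For instance, take $4$ voters $A\succ B\succ C\succ D\succ E\succ F\succ G$, $2$ voters $D\succ C\succ E\succ F\succ G\succ B\succ A$ and $3$ voters $F\succ E\succ G\succ D\succ C\succ B\succ A$. The winners are $AB$, and $5$ of the $9$ voters prefer the Condorcet winner $D$ to both. Your per-set remarks, however, contain two errors.

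First, the $CD$ profile with Condorcet winner $F$ suggested in the text cannot exist. Proposition~\ref{k_away}(2) with $i=6$, $j=4$ shows $F$ never beats $D$ when $CD$ wins; use $E$ instead.

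Second, a gap of size at most $2$ does not give local stability. For $BD$ the hypothesis $i_k\ge m-k=5$ of Proposition~\ref{each_edge} fails, so $E$ is unconstrained. For the profile
\[
\begin{array}{ccccc}
3 & 1 & 1 & 3 & 3\\
\hline
A & B & C & E & F\\
B & C & D & D & G\\
C & D & E & F & E\\
D & E & F & G & D\\
E & F & G & C & C\\
F & G & B & B & B\\
G & A & A & A & A
\end{array}
\]
the Bloc totals of $A,\dots,G$ are $3,4,2,4,3,3,3$ and the winners are $BD$. Yet the last six of the $11$ voters prefer $E$ to both winners; $DF$ fails symmetrically. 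The sets that really are always locally stable are $BE$, $CE$, $CF$: their outside non-winners are handled by Proposition~\ref{k_away} and their gaps by Proposition~\ref{gaps}. So the $k=2$ clause is true only in the weak sense ``not every winning set is guaranteed\dots''. That is exactly what your $AB$ profile establishes, and you should state it that way rather than attempt a per-set version.
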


\section{Simulations}
Propositions \ref{Geh-stable}, \ref{BD_set}, \ref{summary_6} and \ref{summary_7} identify which winning sets are guaranteed to be Gehrlein stable, Condorcet or locally-stable when $m=4, 5, 6$ or $7$. However, they say nothing about how {\it likely} each of these winning sets is to occur, and thus nothing about how likely the winning sets are to have any of these properties.

In this section, we investigate these questions using Monte Carlo simulations, which allow us to determine the likelihood of  each election outcome  based on specific assumptions about voter preferences.  Monte Carlo simulations have been used to investigate how often different voting methods agree \cite{MM24}, and how often they exhibit different undesirable properties \cite{MW23}, \cite{MW25}.  We  use three simulation models to analyze a large number of randomly generated elections with $m=4, 5, 6$ and $7$ candidates $k=2, \ldots, m-1$ winners. In each case, we compare the distribution of winning sets under Bloc  and $k$-Copland voting, and analyze how often the two voting methods agree. We also determine lower bounds on the probability of Bloc voting returning winning sets that are not  Gehrlein-stable or locally stable.

\subsection{Models}
We  use three different models of voter behavior.  In each case, we assume there are $10,001$ voters. (Values of $1001$ and $20001$ voters  did not materially effect the outcomes.)  Each model was run 100,000 times.

The first model, the Independent Anonymous Culture (IAC) model, samples randomly and uniformly from among all possible preference profiles of a fixed size. The IAC model has been widely used in empirical studies of voting methods because of its simplicity  (see  \cite{L93}, for example).
We adapt the IAC model to single-peaked preferences by restricting rankings to those that are compatible with the single-peaked setting. Such an approach has been used previously; see \cite{LCB96}, \cite{G03} and  \cite{MW23}, among others.
For example,  if there are  $5$ candidates then, using the notation of   Section \ref{sec_4_and_5}, there are 16 possible rankings, and each preference profile must satisfy
$$x_1 + x_2 + \cdots + x_{16} = 10,001.$$
There are $\binom{10,016}{15}$ sets of nonnegative integer solutions to this equation, each corresponding to a different preference profile.  The IAC model selects  randomly and uniformly from among these  profiles.

The other two models are spatial and based on a Euclidean norm (see \cite{KGF20}, and \cite{MW23} for previous analyses using spatial models.) Our first spatial model assumes voters lie along a number line according to a standard normal distribution; the second model assumes voter positions are bimodal, to capture the  effect of increased polarization among the electorate.

 More specifically, in the Euclidean Normal (EN) model,  voters are randomly assigned an ``ideal point''   on the real number line using a standard normal distribution with mean $0$ and standard deviation $1$. Candidate positions are selected  randomly (and uniformly) from among the voter positions.  Voters'  preferences are based on their Euclidean distance from each candidate: the candidate that is closest is ranked first, the candidate that is next closest is ranked second, and so on. Aggregating these rankings together yields a voter profile. The second model, the Euclidean Bimodal (EB)  is the same as the EN model except that  voter are assigned positions based on a bimodal distribution which is an equally weighted combination of  standard normal distributions centered at $-1$ and $1.$

\subsection{Results of the Simulations}
Results from the IAC simulations suggest that Bloc voting produces results that are very similar to those of the Copland method. Table \ref{IAC} lists the probabilities of each winning coalition occurring, as well as the probability of the two methods agreeing. For instance, with $m=5$ candidates and winning coalitions of size $k=2$, the probability of Bloc and Copland methods agreeing (and hence Bloc voting producing winning coalitions that are Gehrlein-stable), is $0.99420$

When profiles are unrestricted (not confined to a spatial model), the IAC model is known to produce   elections in which all candidates are closely tied. In the single-peaked setting when not all rankings are allowed, candidates towards the extreme left and right will be higher-ranked in fewer ballots and thus will occur noticeable less often than more centrist candidates. This is visible in Table \ref{IAC}  where, for example, when $m=6$ and $k=2$,  the winning coalitions $AB$ and $EF$, while theoretically possible,  occur with probability less than 0.00001. The same is true when $m=7$ and $k=2, 3$ or $4$ for winning coalitions such as $ABC$.
As a consequence, the probability under IAC that Bloc voting produces a wining coalition that is Gehrlein-stable when there are 7 or fewer candidates is extremely high.

\begin{table}[h]\label{IAC}
\begin{tabular}{ccc|cc}
Winning  & Bloc   & Copland& Winning Set & Bloc \&Copland  \\
Coalition & Probability & Probability&&Probability  \\
\hline
 \multicolumn{3}{c}{$m=4$ and $k=2$}   &&\\
 \hline
 $AB^*$ & 0.06200 & 0.06200 &\\
 $BC^*$ & 0.87596 & 0.87596 &\\
 $CD^*$ & 0.06204 & 0.06204 &\\
 \hline
 Agreement: &\multicolumn{2}{c}{1}   && \\
 \hline
\hline
 \multicolumn{3}{c}{$m=5$ and $k=2$}  & \multicolumn{2}{c}{$m=5$ and $k=3$}\\
\hline
$AB$ & 0.00126  & 0.00040& $ABC^*$ &0.01766 \\
$BC^*$ & 0.49684  & 0.49972& $BCD^*$ &0.96518  \\
$BD^\dagger$ & 0.00367  &0& $CDE^*$ &0.01716 \\
$CD^*$ & 0.49660 & 0.49952&&\\
$DE$ & 0.00142  &0.00036 &&\\
\hline
Agreement: &\multicolumn{2}{c}{0.99420}   & \multicolumn{2}{c}{1}\\
\hline
\hline
 \multicolumn{3}{c}{$m=6$ and $k=2$} & \multicolumn{2}{c}{$m=6$ and $k=3$ and $k=4$}\\
 \hline
$BC$ & 0.03010 &   0.01483 & $BCD^*$&0.49739 \\
$BD^\dagger$ & 0.00065  &  0 & $CDE^*$&0.50261 \\
\cline{4-5} 
$CD^*$ & 0.93785 &  0. 97008 &  $ABCD^*$ &0.00173 \\
$CE^\dagger$ & 0.00069 &  0& $BCDE^*$  &0.996661\\
$DE$ & 0.033071 &  0.01509& $CDEF^*$ &0.00166 \\
\hline
Agreement: &\multicolumn{2}{c}{0.96762}   & \multicolumn{2}{c}{1}\\
\hline
\hline
\multicolumn{3}{c}{$m=7$ and $k=2$ and $k=3$} & \multicolumn{2}{c}{$m=7$ and $k=4$ and $k=5$}\\
 \hline
$BC$ & 0.00003 &  0 & $BCDE^*$&0.49968 \\
$BD$ & 0.00001  &  0 & $CDEF^*$&0.50032 \\
$CD$ & 0.49865  &  0.50044 &&  \\
$DE$ & 0.50029  &  0.49956 && \\
$EF$ & 0.00004 &  0&  & \\
\hline
Agreement: &\multicolumn{2}{c}{0.92484}  &  \multicolumn{2}{c}{1}\\
\hline
$BCD^*$ & 0.00134& 0.00134& $ABCDE^*$ &0.00001  \\
$CDE^*$ &0.99745 &0.99745 & $BCDEF^*$ & 0.99997\\
$DEF^*$ & 0.00121& 0.00121& $CDEFG^*$ & 0.00002\\
\hline
Agreement: &\multicolumn{2}{c}{1}   & \multicolumn{2}{c}{1}\\
\hline
\hline
\hline
\end{tabular}
\caption{Probability of obtaining different winning coalitions using Bloc and $k$-Copland voting under the IAC model. Winning coalitions guaranteed to be Gehrlein-stable  denoted with a $*$; winning coalitions guaranteed to be locally (but not Gerhlein)-stable  denoted with a $\dagger$}
\label{IAC}
\end{table}

In contrast, the probability that Bloc and Copland methods agree under the spatial models  is much smaller. The results of the simulations under EN and EB models are  shown in Tables   \ref{EN_and_bimodal} and \ref{EN_and_bimodal_7}. Under both of these models, elections tend to be much less close, and hence the range of possible winning coalitions is much wider. This is particularly true under the EB model, where almost every possible winning coalition occurs (albeit with small probability).

\begin{table}[!ht]
\begin{tabular}{c|c|c|c|c}
\hline
 & \multicolumn{2}{c}{EN model}& \multicolumn{2}{c}{EB model}\\
\hline
$m/k$  & Gehrlein-stable   &  Locally-stable & Gehrlein-stable   & Locally-stable\\
\hline
$m=5$, $k=2$  &0.37985 &0.58014 & 0.03214 &0.45045\\
$m=5$, $k=3$  &1 & 1 & 1&1\\
$m=6$, $k=2$& 0.07518& 0.29785 &0.00063& 0.37809\\
$m=6$, $k=3,4$  &1 & 1 & 1&1\\
$m=7$, $k=2$&  0 &0 &0&0\\
$m=7$, $k=3$& 0.52902 &0.78172&0.04129&0.70927\\
$m=7$, $k=4, 5$  &1 & 1 & 1&1\\
\hline
\end{tabular}
\caption{Lower Bounds for the probability that winning coalitions under Bloc voting are Gehrlein-stable or locally-stable under EN and EB models}
\label{EN_and_bimodal_stable}
\end{table}

\begin{table}[h]
\begin{tabular}{c|cc|cc}
&\multicolumn{2}{c}{Standard Normal}  & \multicolumn{2}{c}{Bimodal Normal}\\
\hline
Winning  & Bloc   & Copland&  Bloc    & Copland \\
Set & Probability & Probability&Probability & Probability \\
\hline
\multicolumn{5}{c}{$m=4$ and $k=2$}  \\
\hline
$AB^*$ & 0.25000& 0.25000 & 0.25095& 0.25095 \\
$BC^*$ & 0.50097& 0.50097 &0.49879 &0.49879 \\
$CD^*$& 0.24903& 0.24903 & 0.25026& 0.25026\\
\hline
Agree: &\multicolumn{2}{c}{1}   & \multicolumn{2}{c}{1} \\
\hline
\hline
 \multicolumn{5}{c}{$m=5$ and $k=2$}  \\
\hline
$AB$ & 0.20922  & 0.12685  &0.27432 & 0.12435\\
$AD$&0 & 0& 0.00001 & 0 \\
$BC^*$ & 0.18870  & 0.37067  &0.01635 & 0.37682\\
$BD^\dagger$ & 0.20029  &0& 0.41831 &0\\
$CD^*$ & 0.19115 & 0.37700 &0.01578 & 0.37395\\
$DE$ & 0.21064  &0.12548  &0.27523 &0.12488\\
\hline
Agree: &\multicolumn{2}{c}{0.63218}   & \multicolumn{2}{c}{0.28136} \\
\hline
\hline
  \multicolumn{5}{c}{$m=5$ and $k=3$}  \\
 \hline
$ABC^*$& 0.25060 & 0.25060 & 0.24913 &0.24913  \\
$BCD^*$ & 0.50089 & 0.50089 &0.50252  &0.50252 \\
$CDE^*$ & 0.24851  &  0.24851 & 0.24835 & 0.24835\\
\hline
Agree: &\multicolumn{2}{c}{1}   & \multicolumn{2}{c}{1}  \\ 
\hline
\hline
  \multicolumn{5}{c}{$m=6$ and $k=2$}  \\
 \hline
 $AB$ & 0.18031  &0.06332 & 0.29907 &0.06360\\
$BC$ & 0.17212 &  0.25105&   0.01216& 0.25107 \\
$BD^\dagger$ & 0.02638 &  0& 0.02023 &0\\
$BE^\dagger$ & 0.16973 &  0& 0.33805  &0 \\
$CD^*$ & 0.07518 & 0.37476 &0.00063	& 0.37378 \\
$CE^\dagger$ & 0.02656 & 0 &0.01918&0\\
$DE$ & 0.17036 & 0.24852 & 0.01209	& 0.24959\\
$EF$ & 0.17936 & 0.06235 &0.29859& 0.06196\\
\hline
Agree: &\multicolumn{2}{c}{0.42981}   & \multicolumn{2}{c}{0.14050} \\
\hline
\hline
\multicolumn{5}{c}{$m=6$ and $k=3$} \\
 \hline
$ABC^*$ & 0.12434  & 0.12434 &0.12587	& 0.12587 \\
$BCD^*$ & 0.37874 &  0.37874 &0.37436& 0.37436\\
$CDE^*$& 0.37209 & 0.37209 &0.37360& 0.37360 \\
$DEF^*$ & 0.12483 & 0.12483 & 0.12617& 0.12617\\
\hline
Agree: &\multicolumn{2}{c}{1}   & \multicolumn{2}{c}{1} \\
\hline
\hline
\multicolumn{5}{c}{$m=6$ and $k=4$} \\
 \hline
 $ABCD^*$ & 0.24799 & 0.24799&0.25098	& 0.25098 \\
$BCDE^*$ & 0.50105& 0.50105 &0.49857& 0.49857\\
$CDEF^*$ & 0.25096  &  0.25096 & 0.25045& 0.25045 \\
\hline
Agree: &\multicolumn{2}{c}{1}   & \multicolumn{2}{c}{1} \\
\hline
\end{tabular}
\caption{Probability of obtaining different winning coalitions using Bloc and $k$-Copland voting under the EN and EB models for $4, 5$ and $6$ candidates. Winning coalitions guaranteed to be Gehrlein-stable  denoted with a $*$; winning coalitions guaranteed to be locally (but not Gerhlein)-stable  denoted with a $\dagger$.}
\label{EN_and_bimodal}
\end{table}

Table \ref{EN_and_bimodal_stable} provides a lower bound on the probability that a winner coalition will be Gehrlein-stable or locally-stable under either models. The numbers in the table are derived by simply summing up the probabilities of obtaining winning coalitions that are guaranteed to be Gehrlein or locally stable (marked by a $^*$ or a $^\dagger$ respectively in Tables \ref{EN_and_bimodal} and \ref{EN_and_bimodal_7} respectively). They do not include, for instance, the probability, of obtaining an individual  winning coalition  $AB$ (when $m=6$) that is Gehrlein-stable. Hence, the numbers in Table \ref{EN_and_bimodal_stable} are  \emph{underestimates} of the true probability of obtaining Gehrlein or locally stable winning coalitions under either model.

Lastly, we note that it is interesting to compare the winning coalitions under the EN and EB models. Not surprisingly, under the EN model, winning coalitions are much more likely to be closer to the center, whereas under the EB model, winning coalitions are much more likely to include a left and a right wing candidate. For instance, when $m=5$ and $k=2$, the locally stable coalition $BD$ occurs 41\% of the time---twice as often as under the EN model. Similar statements are true for $m=6$ and $k=2$ where $BE$ occurs  33\% versus 17\% of the time, and when $m=7$ and $k=2$ where $BE, BF, CF$ occur with  high frequency. It is also true that a coalition of only left or only right wing candidates is also more likely under the EB model: indeed when $m=7$ and $k=2$, the ``extreme'' winning coalitions $AB$ or $FG$ occur roughly half the time. 
\begin{table}[h]
\begin{tabular}{c|cc|cc}
&\multicolumn{2}{c}{Standard Normal}  & \multicolumn{2}{c}{Bimodal Normal}\\
\hline
Winning  & Bloc   & Copland&  Bloc    & Copland \\
Set & Probability & Probability&Probability & Probability \\
\hline
\multicolumn{5}{c}{$m=7$ and $k=2$} \\
 \hline
 $AB$ & 0.15774 & 0.03114 &0.25759& 0.03180\\
$BC$ & 0.15063& 0.15665 &0.04807 & 0.15792\\
$BD$ & 0.03155  &  0 & 0.00877 &0\\
$BE$ & 0.02974& 0  &  0.07233 &0\\
$BF$ & 0.13287 & 0  &  0.21884 &0\\
$CD$ & 0.06298  & 0.31240 &  0.00060 &0.30903\\
$CE$ & 0.00304 & 0 &0.00358&0 \\
$CF$ & 0.02800 & 0 &0.07477&0\\
$DE$ & 0.06433 & 0.31235& 0.00049&0.31338\\
$DF$ &0.03310 &0 &0.00897&0\\
$EF$ &0.14894 & 0.15603 &0.04844&0.015710\\
$FG$ & 0.15708 & 0.03143&0.25755&0.03077\\
\hline 
Agree: &\multicolumn{2}{c}{0.31119}   & \multicolumn{2}{c}{0.07584} \\
\hline
\hline
\multicolumn{5}{c}{$m=7$ and $k=3$} \\
 \hline
$ABC$ & 0.10895 & 0.06271 & 0.14455& 0.06166\\
$BCD^*$ & 0.14181 & 24916&  0.01195 & 0.24888\\
$BCE^\dagger$ & 0.12462 & 0 &0.33400 &0 \\
$CDE^*$ & 0.24687 & 0.37470 &0.01730& 0.37644\\
$CEF^\dagger$ & 0.12808 & 0 &0.33399&0 \\
$DEF^*$ & 0.14034 & 0.25066 &0.01204 &0.25149\\
$EFG$ & 0.10933 & 0.06277 &0.14617& 0.06153\\
\hline
Agree: &\multicolumn{2}{c}{0.65450}   & \multicolumn{2}{c}{0.16448} \\
\hline
\hline
\multicolumn{5}{c}{$m=7$ and $k=4$} \\
 \hline
$ABCD^*$ & 0.12466 & 0.12466 & 0.12346&0.12346\\
$BCDE^*$ & 0.37343 & 0.37343 &0.37693& 0.37693\\
$CDEF^*$ & 0.37578 & 0.37578 &0.37585& 0.37585\\
$DEFG^*$ & 0.12613 & 0.12613 &0.12376&0.12376\\
\hline
Agree: &\multicolumn{2}{c}{1}   &\multicolumn{2}{c}{1} \\
\hline
\hline
\multicolumn{5}{c}{$m=7$ and $k=5$} \\
 \hline
$ABCDE^*$ & 0.24929 & 0.24929 &0.25268&0.25268\\
$BCDEF^*$ & 0.50204 & 0.50204 &0.49806&0.49806\\
$CDEFG^*$ & 0.24867 & 0.24867 &0.24926& 0.24926\\
\hline
Agree: &\multicolumn{2}{c}{1}   &\multicolumn{2}{c}{1} \\
\hline
\end{tabular}
\caption{Wining Sets under Bloc and $k$-Copland methods under the EN and EB models for $7$ candidates.Probability of obtaining different winning coalitions using Bloc and $k$-Copland voting under the EN and EB models for $7$ candidates. Winning coalitions guaranteed to be Gehrlein-stable  denoted with a $*$; winning coalitions guaranteed to be locally (but not Gerhlein)-stable  denoted with a $\dagger$.}
\label{EN_and_bimodal_7}
\end{table}

\section{Conclusion}
In this paper, we have characterized the winning coalitions that emerge under Bloc voting when voters have single-peaked preferences. We have also investigated when the winning sets are adjacent and characterized them with respect to head-to-head contests. If the number of winners is at least half the number of candidates, then the winning coalitions consist of adjacent sets of candidates.  In this case, the winning set is Gehrlein-stable, meaning that no non-winning candidate can beat a winning candidate in a head-to-head contest.

If the number of winners is less than half the number of candidates, the winning coalitions may be non-adjacent. Even if they are adjacent, they may fail Gehrlein-stability.  Nevertheless, Propositions \ref{each_edge} provides restrictions on which non-winning candidates can beat winning candidates. Additionally, Proposition \ref{gaps} demonstrates that even small winning coalitions may satisfy local stability.

The Monte Carlo simulations provide  bounds for the likelihood that winning sets under Bloc voting will be Gehrlein or locally stable. Under the IAC model, in particular, where elections are frequently close, the winning sets are highly likely to be locally stable. This is less true under either spatial model.

Future research might extend this analysis to provide additional characterizations of winning sets for an arbitrary numbers of candidates. As more tools become available to map existing election data onto a one-dimensional spectrum, it would also be interesting to investigate how well Bloc voting fares on this empirical data.

Bloc voting is often dismissed \cite{EF17} as insufficiently able to capture voters' true preferences. However this analysis suggests that if $k$ is relatively large compared to $m$, that Bloc voting may perform similarly to a Condorcet-consistent method when voter preferences are single-peaked. Bloc voting also has the advantage of being simple to understand and widely used. Further analysis of its additional properties seems warranted.

\section*{Acknowledgments} 

The authors gratefully acknowledge the Institute for Mathematics and Democracy and its founder, Professor Ismar Volić, for providing the intellectual environment and collaborative framework that made this research possible. The Institute's mission to advance the mathematical analysis of democratic processes and voting systems has been instrumental in bringing our research team together and fostering the interdisciplinary dialogue that shaped this work.


{\footnotesize
\begin{thebibliography}{00}
\bibitem{B48} D. Black, On the rationale of group decision-making, {\it J.  Pol. Econ.}, {\bf 56} (1948), 23--34.

\bibitem{HE17} H. Aziz, E. Elkind, P. Faliszewski, M. Lackner \& P. Skowron, The condorcet principle for multiwinner elections: from shortlisting to proportionality, {\it Proceedings of the Twenty-Sixth International Joint Conference on Artificial Intelligence (IJCAI-17)}, (2017), 84--90. \url{https://doi.org/10.24963/ijcai.2017/13}

\bibitem{EF17} 
E. Elkind, P. Faliszewski, J. F. Laslier, P. Skowron, A. Slinko,  \& N. Talmon, What Do Multiwinner Voting Rules Do? An Experiment Over the Two-Dimensional Euclidean Domain, {\it Proceedings of the AAAI Conference on Artificial Intelligence} {\bf 31} (2017).  \url{https://doi.org/10.1609/aaai.v31i1.10612}


\bibitem{EF17a}
E. Elkind, P. Faliszewski,  P. Skowron, A. Slinko,. Properties of multiwinner voting rules. {\it Soc. Choice  Welf.}, {\bf  48} (2017), 599--632. \url{ https://doi.org/10.1007/s00355-017-1026-z}

\bibitem{G85}
W.V. Gehrlein, The Condorcet criterion and committee selection, {\it Math. Social Sci. }, {\bf 10}, (1985), 199--209.

\bibitem{G03} 
W.V. Gehrlein,  Condorcet efficiency and proximity to single-peaked preferences [Paper presentation] in: {\it Third international conference on logic, game theory and social choice}, Siena, (2003). 

\bibitem{L93} 
D. Lepelley,  On the probability of electing the Condorcet loser, {\it Math. Social Sci.}, {\bf 25}, (1993), 105--116.

\bibitem{LCB96} 
D; Lepelley, F. Chantreuil,  \& S. Berg,  The likelihood of monotonicity paradoxes in run-off elections.  {\it Math. Social Sci.}, {\bf 31}, (1996). 133--146.

\bibitem{LLS}
D. Lepelley, A. Louichi \& H. Smaoui, On Ehrhart polynomials and probability calculations in voting theory, {\it  Soc. Choice  Welf.}, {\bf 30}, (2008), 363--383. \url{ https://doi.org/10.1007/s00355-007-0236-1}

\bibitem{KGF20}
D.M. Kilgour, J.  Gregoire, \& A.M. Foley, The prevalence and consequences of ballot truncation in ranked-choice elections, {\it Pub. Choice}, {\bf 84}, (2020), 197--218.

\bibitem{MG24}
D. McCune, A. Graham-Squire, Monotonicity anomalies in Scottish local government elections, {\it Soc. Choice  Welf.}, {\bf 63} (2024), 69--101. \url{https://doi.org/10.1007/s00355-024-01522-5}

\bibitem{MM24}
D. McCune, E. Martin, G. Latina, G. \& K. Simms, A comparison of sequential ranked-choice voting and single transferable vote, {\it. J Comp. Soc. Sci.}, {\bf 7}, (2024), 643--670. \url{https://doi.org/10.1007/s42001-024-00249-8}

\bibitem{MW23} 
D. McCune and J. Wilson,  Ranked-Choice Voting and the Spoiler Effect, {\it Pu Ch} {\bf196},  (2023), 19--50. \url{https://doi.org/10.1007/s11127-023-01050-3}

\bibitem{MW25}
D. McCune and J. Wilson,  The negative participation paradox in three‑candidate instant runoff elections, {\it Theory and Decision}, {\bf 98},  (2025), 537--559.
\url{https://doi.org/10.1007/s11238-025-10026-2}

\bibitem{R03}
T.C.Ratliff,  Some startling inconsistencies when electing committees, {\it Soc. Choice  Welf.}, {\bf 21}, (2023), 43--454.

\bibitem{SF19}
P. Skowron, P.  Faliszewski,  \& A. Slinko,  Axiomatic characterization of committee scoring rules {\it  J .Econ. Th.}, {\bf 180}, (2019), 244--273.
https://doi.org/10.1016/j.jet.2018.12.011

\end{thebibliography}}
 \end{document}